\let\proof\@undefined
\let\endproof\@undefined
\newcommand{\extended}[1]{#1}
\newcommand{\short}[1]{}
\newcommand{\IEEEtitleabstractindextext}[1]{#1}
\newenvironment{IEEEkeywords}{{\bf Keywords: }}{}
\newlength{\leolen}
\newcommand{\skiptext}[1]{\settowidth{\leolen}{#1}\hspace*{\leolen}}
\newlength{\leoframelength}
\newcommand{\leoframe}[1]{\framebox{\parbox[t]{\leoframelength}{\vspace*{0.5em}#1}}}
\newcommand{\ignore}[1]{}
\newcommand{\s}[1]{\textsf{#1}}
\newcommand{\app}{\mbox{$+\!\!\!+$}}
\newcommand{\gb}{\Uparrow}
\newcommand{\defined}{\triangleq}
\newcommand{\diffuse}[1]{\widehat{\,{#1}\,}}
\newcommand{\diff}[1]{\widetilde{\,{#1}\,}}
\newcommand{\sem}[1]{\mbox{$[\![$}#1\mbox{$]\!]$}}
\newcommand{\trq}[1]{\textrm{$\mathcal{T}_q$}\sem{#1}}
\newcommand{\tre}[1]{\textrm{$\mathcal{T}_e$}\sem{#1}}
\newcommand{\trc}[1]{\textrm{$\mathcal{T}_c$}\sem{#1}}
\newcommand{\monoid}[1]{\textrm{$\mathcal{M}$}\sem{#1}}
\newcommand{\answer}[1]{\textrm{$\mathcal{A}$}\sem{#1}}
\newenvironment{tab}{\vspace{1ex}\begin{minipage}[l]{\textwidth}\sf\begin{tabbing}}{\end{tabbing}\end{minipage}\vspace{0.5ex}}
\newtheorem{theorem}{Theorem}
\newtheorem{lemma}{Lemma}
\newtheorem{algorithm}{Algorithm}
\newtheorem{definition}{Definition}
\title{Incremental Query Processing\\ on Big Data Streams}
\author{Leonidas Fegaras\\
University of Texas at Arlington\\
{\tt\small fegaras@cse.uta.edu}}
\begin{document}

\extended{\maketitle}

\IEEEtitleabstractindextext{
\begin{abstract}
\noindent
This paper addresses online query processing for large-scale,
incremental data analysis on a distributed stream processing engine
(DSPE). Our goal is to convert any SQL-like query to an incremental
DSPE program automatically. In contrast to other approaches, we derive
incremental programs that return accurate results, not approximate
answers. This is accomplished by retaining a minimal state during the
query evaluation lifetime and by using incremental evaluation
techniques to return an accurate snapshot answer at each time interval
that depends on the current state and the latest batches of data. Our
methods can handle many forms of queries on nested data collections,
including iterative and nested queries, group-by with aggregation, and
equi-joins. Finally, we report on a prototype implementation of our
framework, called MRQL Streaming, running on top of Spark and we
experimentally validate the effectiveness of our methods.
\end{abstract}

\begin{IEEEkeywords}
Incremental Data Processing, Distributed Stream Processing, Big Data, MRQL, Spark.
\end{IEEEkeywords}}

\short{\maketitle}

\section{Introduction}

In recent years, large volumes of data are being generated, analyzed,
and used at an unprecedented scale and rate. Data analysis tools that
process these data are typically batch programs that need to work on
the complete datasets, thus repeating the computation on existing data
when new data are added to the datasets. Consequently, batch
processing may be prohibitively expensive for Big Data that change
frequently. For example, the Web is evolving at an enormous rate with
new Web pages, content, and links added daily. Web graph analysis
tools, such as PageRank, which are used extensively by search engines,
need to recompute their Web graph measures very frequently since they
become outdated very fast. There is a recent interest in incremental
Big Data analysis, where data are analyzed in incremental fashion, so
that existing results on current data are reused and merged with the
results of processing the new data. Incremental data processing can
generally achieve better performance and may require less memory than
batch processing for many data analysis tasks.  It can also be used
for analyzing Big Data incrementally, in batches that can fit in
memory.  Consequently, incremental data processing can also be useful
to stream-based applications that need to process continuous streams
of data in real-time with low latency, which is not feasible with
existing batch analysis tools.  For example, the Map-Reduce
framework~\cite{dean:osdi04}, which was designed for batch processing,
is ill-suited for certain Big Data workloads, such as real-time
analytics, continuous queries, and iterative algorithms. New
alternative frameworks have emerged that address the inherent
limitations of the Map-Reduce model and perform better for a wider
spectrum of workloads. Currently, among them, the most promising
frameworks that seem to be good alternatives to Map-Reduce while
addressing its drawbacks are Google's Pregel~\cite{pregel:podc09},
Apache Spark~\cite{spark}, and Apache Flink~\cite{flink}, which are
in-memory distributed computing systems. There are also quite a few
emerging distributed stream processing engines (DSPEs) that realize
online, low-latency data processing with a series of batch
computations at small time intervals, using a continuous streaming
system that processes data as they arrive and emits continuous
results. To cope with blocking operations and unbounded memory
requirements, some of these systems build on the well-established
research on data streaming based on sliding windows and incremental
operators~\cite{babcock:pods02}, which includes systems such as
Aurora~\cite{aurora} and Telegraph~\cite{telegraphCQ}, often yielding
approximate answers, rather than accurate results.  Currently, among
these DSPEs, the most popular platforms are Twitter's (now Apache)
Storm~\cite{peng:osdi10}, Spark's D-Streams~\cite{spark:sosp13}, Flink
Streaming~\cite{flink}, Apache S4~\cite{S4}, and Apache
Samza~\cite{samza}.

This paper addresses online processing for large-scale, incremental
computations on a distributed processing platform. Our goal is to
convert any batch data analysis program to an incremental distributed
stream processing (DSP) program automatically, without requiring the
user to modify this program. We are interested in deriving DSP
programs that produce accurate results, rather than approximate
answers. To accomplish this task, we would need to carefully analyze
the program to identify those parts that can be used to process the
incremental batches of data and those parts that can be used to merge
the current results with the new results of processing the incremental
batches. Such analysis is hard to attain for programs written in an
algorithmic programming language but can become more tractable if it
is performed on declarative queries. Fortunately, most programmers
already prefer to use a higher-level query language, such as Apache
Hive~\cite{hive}, to code their distributed data analysis
applications, instead of coding them directly in an algorithmic
language, such as Java. For instance, Hive is used for over 90\% of
Facebook Map-Reduce jobs. There are many reasons why programmers
prefer query languages. First, it is hard to develop, optimize, and
maintain non-trivial applications coded in a general-purpose
programming language. Second, given the multitude of the new emerging
distributed processing frameworks, such as Spark and Flink, it is hard
to tell which one of them will prevail in the near future. Data
intensive applications that have been coded in one of these paradigms
may have to be rewritten as technologies evolve. Hence, it would be
highly desirable to express these applications in a declarative query
language that is independent of the underlying distributed
platform. Furthermore, the execution of such queries can benefit from
cost-based query optimization and automatic parallelism, thus
relieving the application developers from the intricacies of Big Data
analytics and distributed computing. Therefore, our goal is to convert
batch SQL-like queries to incremental DSP programs. We have developed our
framework on Apache MRQL~\cite{mrql}, because it is both
platform-independent and powerful enough to express complex data
analysis tasks, such as PageRank, data clustering, and matrix
factorization, using SQL-like syntax exclusively. Since we are
interested in deriving incremental programs that return accurate
results, not approximate answers, our focus is in retaining a minimal
state during the query evaluation lifetime and across iterations, and
use incremental evaluation techniques to return an accurate snapshot
answer at each time interval that depends on the current state and the
latest batches of data.

We have developed general, sound methods to transform batch queries to
incremental queries.  The first step in our approach is to transform a
query so that it propagates the join and group-by keys to the query
output.  This is known as lineage
tracking~(\cite{benjelloun:vldb06,bhagwat:vldb04,cui:vldb01}).  That way, the
values in the query output are grouped by a key combination, which
corresponds the join/group-by keys used in deriving these values
during query evaluation.  If we also group the new data in the same
way, then computations on current data can be combined with the
computations on the new data by joining the data on these keys.  This
approach requires that we can combine computations on data that have the
same lineage to derive incremental results.  In our framework, this is
accomplished by transforming a query to a 'monoid homomorphism' by
extracting the non-homomorphic parts of the query outwards, using
algebraic transformation rules, and combine them to form an answer
function, which is detached from the rest of the query.
We have implemented our incremental processing framework using Apache
MRQL~\cite{mrql} on top of Apache Spark Streaming~\cite{spark:sosp13},
which is an in-memory distributed stream processing platform. 
Our system is called {\em Incremental MRQL}. MRQL is
currently the best choice for implementing our framework because other
query languages for data-intensive, distributed computations provide
limited syntax for operating on data collections, in the form of
simple relational joins and group-bys, and cannot express complex data
analysis tasks, such as PageRank, data clustering, and matrix
factorization, using SQL-like syntax exclusively.  Our framework
though can be easily adapted to apply to other query languages, such
as SQL, XQuery, Jaql, and Hive. 

The contribution of this work can be summarized as follows:
\begin{itemize}
\item We present a general automated method to convert most distributed
  data analysis queries to incremental stream processing programs.
\item Our methods can handle many forms of queries, including
  iterative and nested queries, group-by with aggregation, and joins
  on one-to-many relationships.
\extended{\item We have extended our framework to handle deletions
by using state increments to diminish the state in such a way that
the query results would be the same as those we could have gotten on
current data after deletion.}
\item We report on a prototype implementation of our framework using
  Apache MRQL running on top of Apache Spark Streaming. We show the
  effectiveness of our method through experiments on four
  queries: groupBy, join-groupBy, k-means clustering, and PageRank.
\end{itemize}

The rest of this paper is organized as follows.
Section~\ref{approach} illustrates our approach
through examples.  Section~\ref{related-work} compares our work with
related work.  Section~\ref{mrql-language} describes our earlier work
on MRQL query processing. Section~\ref{mrql-algebra} defines the
algebraic operators used in the MRQL algebra.
Section~\ref{normalization} presents transformation rules for
converting algebraic terms to a normal form that is easier to analyze.
Section~\ref{monoid-inference} describes an inference algorithm that
statically infers the merge function that merges the current results
with the results of processing the new data.  Section~\ref{injection}
describes our method for transforming algebraic terms to propagate all
keys used in joins and group-bys to the query
output. Section~\ref{factoring} describes an algorithm that pulls the
non-homomorphic parts of a query outwards, deriving a
homomorphism. Section~\ref{example}, gives an example of the
transformation of a batch query to an incremental query.
\extended{Section~\ref{deletion} extends our framework to handle data deletion.}
Section~\ref{implementation} gives some implementation details.
Section~\ref{performance} presents experiments that evaluate the
performance of our incremental query processing techniques for four
queries.

\section{Highlights of our Approach}\label{approach}

A dataset in our framework is a bag (multiset) that consists of
arbitrarily complex values, which may contain nested bags and
hierarchical data, such as XML and JSON fragments.  We are considering
continuous queries over a number of streaming data sources, $S_i$, for
$0< i\leq n$. A data stream $S_i$ in our framework consists of an
initial dataset, followed by a continuous stream of incremental
batches $\Delta S_i$ that arrive at regular time intervals $\Delta
t$. In addition to streaming data, there may be other input data
sources that remain invariant through time. A streaming query in our
framework can be expressed as $q(\overline{S})$, where an
$S_i\in\overline{S}$ is a streaming data source.  Incremental stream
processing is feasible when we can derive the query results at time
$t+\Delta t$ by simply combining the query results at time $t$ (i.e.,
the current results) with the results of processing the incremental
batches $\Delta S_i$ only, rather than the entire streams
$S_i\uplus\Delta S_i$, where $\uplus$ is the additive (bag)
union. This is possible if $q(\overline{S\uplus\Delta S})$ can be
expressed in terms of $q(\overline{S})$ (the current query result)
and $q(\overline{\Delta S})$ (the incremental query result), that is,
when $q(\overline{S})$ is a homomorphism\footnote{We use the term {\em
homomorphism} throughout the paper as an abbreviation of {\em monoid
homomorphism}.}  over $\overline{S}$. But some queries, such as
counting the number of distinct elements in a stream or calculating
average values after a group-by, are not homomorphisms. For such
queries, we break $q$ into two functions $a$ and $h$, so that
$q(\overline{S})=a(h(\overline{S}))$ and $h$ is a homomorphism. The
function $h$ is a homomorphism if $h(\overline{S\uplus\Delta S})
=h(\overline{S})\otimes h(\overline{\Delta S})$ for some monoid
$\otimes$ (an associative function with a zero element
$\otimes_z$). For example, the query that counts the number of
distinct elements can be broken into the query $h$ that returns the
list of distinct elements (a homomorphism), followed by the answer
query $a$ that counts these elements. Ideally, we would like most of
the computation in $q$ to be done in $h$, leaving only some
computationally inexpensive data mappings to the answer function
$a$. Note that, the obvious solution when $a$ is equal to $q$ and $h$
is the union of data sources, is also the worst-case scenario that we
try to avoid, since it basically requires to compute the new result
from the entire input, $\overline{S\uplus\Delta S}$. On the other
hand, in the special case when $a$ is the identity function and
$\otimes$ is equal to $\uplus$, the output at each time interval can
be simply taken to be only $h(\overline{\Delta S})$, which is the
output we would expect to get from a fixed window system (i.e., new
batches of output from new batches of input).

If we split $q$ into a homomorphism $h$ and an answer function $a$,
then we can calculate $h$ incrementally by storing its results into a
state and then using the current state to calculate the next $h$ result.
Initially, $\mathrm{state}=\otimes_z$ or, if there are initial stream
data, $\mathrm{state}=h(\overline{S})$. Then, at each
time interval $\Delta t$, the query answer is calculated from the state,
which becomes equal to $h(\overline{S\uplus\Delta S})$:
\[\begin{array}{l}
\mathrm{state}\;\leftarrow\;\mathrm{state}\,\otimes\, h(\overline{\Delta S})\\
\mathbf{return}\; a(\mathrm{state})
\end{array}\]
In Spark, for example, the state and the invariant data sources are
stored in memory as Distributed DataSets (Spark's
RDDs~\cite{{spark:nsdi12}}) and are distributed across the worker
nodes. However, the streaming data sources are implemented as
Discretized Streams (Spark's D-Streams~\cite{spark:sosp13}),
which are also distributed.

Our framework works best for queries whose output is considerably smaller
than their input, such as for data analysis queries that
aggregate data. Such queries would require a smaller state and impose
less processing overhead to $\otimes$.

Figure~1 shows the evaluation of an incremental query
$q(S_1,S_2)=a(h(S_1,S_2))$ over two streaming
data sources, where
$h(S_1\uplus\Delta S_1,S_2\uplus\Delta S_2)=h(S_1,S_2)\otimes h(\Delta S_1,\Delta S_2)$.

\begin{figure}
\begin{center}
\scalebox{0.15}{\includegraphics{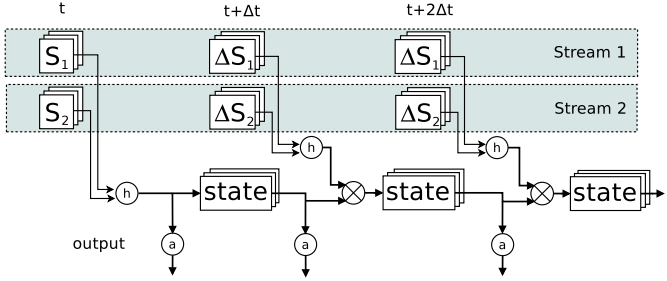}}
\end{center}
\caption{Incremental Query Processing}\label{stream.fig}
\end{figure}

Our query processing system performs the following tasks:
\begin{enumerate}
\item It pulls all non-homomorphic parts of a query $q$ out from the query,
using algebraic transformations.
\item It collects these non-homomorphic parts into an answer function $a$,
leaving an algebraic homomorphic term $h$, such that $q(\overline{S})=a(h(\overline{S}))$.
\item From the homomorphic algebraic term $h(\overline{S})$, our system derives a merge function
$\otimes$, such that $h(\overline{S\uplus\Delta S})=h(\overline{S})\otimes
h(\overline{\Delta S})$.
\end{enumerate}
These tasks are, in general, hard to attain for a program
expressed in an algorithmic programming language. Instead, these tasks
can become more tractable if they are performed on higher-order
operations, such as the MRQL query
algebra~\cite{edbt12,webdb11}. Although all algebraic operations used
in MRQL are homomorphic, their composition may not be. We have
developed transformation rules to derive homomorphisms from
compositions of homomorphisms, and for pulling
non-homomorphic parts outside a query. Our methods can handle
most forms of queries on nested data sets, including iterative
queries, complex nested queries with any form and any number of
nesting levels, general group-bys with aggregations, and general
one-to-one and one-to-many equi-joins.
Our methods cannot handle non-equi-joins and many-to-many equi-joins,
as they are notoriously difficult to implement efficiently in a
streaming or an incremental computing environment.

\paragraph*{\bf Example}
For example, consider the following query (all queries are expressed in MRQL):
\begin{lstlisting}
#$q(S_1,S_2)$# = select (x.A, avg(y.D))
             from x in #$S_1$#, y in #$S_2$#
            where x.B = y.C
            group by x.A
\end{lstlisting}
where $S_1$ and $S_2$ in $q(S_1,S_2)$ are streaming data sources.
Unfortunately, $q(S_1,S_2)$ is not a homomorphism over $S_1$ and $S_2$,
that is, $q(S_1\uplus\Delta S_1,S_2\uplus\Delta S_2)$ cannot be expressed
in terms of $q(S_1,S_2)$ and $q(\Delta S_1,\Delta S_2)$ exclusively.
The intrinsic reason behind this is that there is no lineage in the
query output that links a pair in the query result to the join key (\s{x.B} or \s{y.C})
that contributed to this pair. Consequently, there is no way to tell
how the new data batches $\Delta S_1$ and $\Delta S_2$ will contribute to
the previous query results if we do not know how these results are
related to the inputs $S_1$ and $S_2$.
To compensate, we need to establish links between the query results and the
data sources that were used to form their values. This is called {\em
lineage tracking} and has been used for consistent representation
of uncertain data~\cite{benjelloun:vldb06} and for propagating
annotations in relational queries~\cite{bhagwat:vldb04}. In our case, this lineage
tracking can be accomplished by propagating all keys used in
joins and group-bys along with the values associated with the keys, so that,
for each combination of keys, we have one group of result values. For our query,
this is done by including the join key \s{x.B} as a group-by key.
That is, in the following transformed query:
\begin{lstlisting}
#$h(S_1,S_2)$# = select ((x.A, x.B), (sum(y.D), count(y.D)))
             from x in #$S_1$#, y in #$S_2$#
            where x.B = y.C
            group by x.A, x.B
\end{lstlisting}
the join key is propagated to the output values so that the avg components, sum and count,
are aggregations over groups that correspond to unique combinations of \s{x.A} and \s{x.B}.
It can be shown that this query is a homomorphism over $S_1$ and $S_2$, provided that the
join is not on a many-to-many relationship.
In general, a query with $N$ joins/group-bys/order-bys will be transformed
to a query that injects the join/group-by/order-by keys to the output
so that each output value is annotated with a combination of $N$ keys.
Note that the output of the query $h$ is larger than that of the original query $q$
because it creates more groups and each group is assigned two values (sum and count),
instead of one (avg). This is expected since we extended $q$ with lineage tracking.
The answer query $a$ that gives the final result in $q(S_1,S_2)=a(h(S_1,S_2))$ is:
\begin{lstlisting}
#$a(X)$# = select (k, sum(s)/sum(c))
         from ((k,j),(s,c)) in #$X$#
        group by k
\end{lstlisting}
that is, it removes the lineage $j$ from $X$ but also groups the
result by the group-by key again and calculates the final avg
values. The merge function for the homomorphism $h$ is a full outer join
on the lineage key $\theta$ that aggregates the matches. It is specified as:
\begin{lstlisting}
#$X\otimes Y$# = select (#$\theta$#, (sx+sy,cx+cy))
           from (#$\theta$#,(sx,cx)) in #$X$#,
                (#$\theta$#,(sy,cy)) in #$Y$#
          union select y from y in #$Y$#
                where #$\pi_1$#(y) not in #$\pi_1(X)$#
          union select x from x in #$X$#
                where #$\pi_1$#(x) not in #$\pi_1(Y)$#
\end{lstlisting}
where $\theta$ matches the lineage $(k,j)$
and $\pi_1$ is pair and bag projection. 

\paragraph*{\bf Merging States}
The effectiveness of our incremental processing depends on the efficient
implementation of the state transformation that merges the previous
state with the results of processing the new data,
$\mathrm{state}\otimes h(\overline{\Delta S})$.
The merge operation $X\otimes Y$ in the previous example can be
implemented efficiently as a partitioned join.
On Spark, for example, both the state and the new results are kept in
the distributed memory of the worker nodes (as RDDs), while the full outer
join can be implemented as a coGroup operation, which shuffles the
join input data across the worker nodes using hash
partitioning. However, when the new state is created by coGroup, it is
already partitioned by the join key and is ready to be used for the
next call to coGroup to handle the next batch of data. Consequently, only the
results of processing the new data, which are typically smaller than
the state, would have to be shuffled across the worker nodes before
coGroup. Although other queries may require different merge functions,
the correlation between the previous state and the results of
processing the new data is always based on the lineage keys.
Therefore, regardless of the query, we can keep the state partitioned
on the lineage keys by simply leaving the new state partitions at the
place they were generated. However, the new results would have to be
partitioned and shuffled across the working nodes to be combined with
the current state.

Our approach is based on the assumption that, since the state is kept
in the distributed memory, there will be very little overhead in
replacing the current state with a new state, as long as it is not
repartitioned. But this assumption may not be valid if the input data
or the current state is larger than the available distributed memory.
Indeed, one of our goals is to be able to process data larger than the
available memory, by processing these data incrementally, in batches
that can fit in memory. However, if the state is stored on disk,
replacing it with a new state may become prohibitively
expensive. Fortunately, there is no need for using a distributed
key-value store, such as HBase, to update the state
in-place. There is, in fact, a simpler solution: since the state is
kept partitioned on the lineage keys, every worker node may store its
assigned state partition on its local disk, as a binary file (such as,
an HDFS Sequence file) sorted by the lineage key. Then, after the
results of processing the new data are distributed to worker nodes
(using uniform hashing based on lineage key), each worker node will
sort its new data partition by the lineage key and will merge it with
its old state (stored on its local disk) to create a new state.\ignore{This
state merging can be done fast if the state data are
stored in consecutive blocks on disk.}

\paragraph*{\bf Iteration}
Given that many important data analysis and mining algorithms,
such as PageRank and k-means clustering, require repetition, we have
extended our methods to include repetition, so that these algorithms
too can become incremental. A repetition can take the following
general form:
\[\begin{array}{l}
X\;\leftarrow\;g(\overline{S})\\
\mathbf{for}\;\;i\leftarrow 1\ldots n\\
\skiptext{for(}\,X\;\leftarrow\;f(X,\overline{S})
\end{array}\]
where $X$ is the fixpoint of the repetition that has initial value $g(\overline{S})$.
That is, $X=f^n(g(\overline{S}),\overline{S})$,
where $f^n$ applies $f$ $n$ times.
Note that $X$ is not necessarily a bag. For example, non-negative Matrix
Factorization\extended{~\cite{systemML}, used in machine learning applications,
such as for recommender systems,} splits a matrix $S$ into two
non-negative matrices $W$ and $H$. In that case, the fixpoint $X$
is the pair $(W,H)$, which is refined at every loop step.

An exact incremental solution of the above repetition is only possible
if $f$ is a homomorphism; a strict requirement that excludes many
important iterative algorithms, such as PageRank and k-means
clustering. Instead, our approach is to use an approximate solution
that works well for iterative queries that improve a solution at each
iteration step. As before, we split $f$ into a homomorphism $h$, for
some monoid $\otimes$, and an answer function $a$, so that
$f(X,\overline{S})=a(h(X,\overline{S}))$ and
$h(X,\overline{S\uplus\Delta S}) =h(X,\overline{S})\otimes
h(X,\overline{\Delta S})$.  Let $T$ be the current state on input
$\overline{S}$ and $T'$ be the new state on input
$\overline{S\uplus\Delta S}$.  An ideal solution would have been to
derive the new state $T'$ incrementally, as $T\otimes\Delta T$, so
that the state increment $\Delta T$ depends on $\Delta S$ but not on
$T$.  But this independence from $T$ is not always possible for many
iterative algorithms.  In PageRank, for example, we cannot just
calculate the PageRanks of the new data and merge them with the
PageRank of the existing data because the new PageRank contributions
may have to propagate to the rest of the graph.  On the other hand, it
would be too expensive to correlate the entire state $T$ with $\Delta
T$ at each iteration step. Our compromise is to partially correlate
$T$ with $\Delta T$ at each iteration step, and then fully merge $\Delta T$ with
$T$ after the iteration.  This partial correlation is accomplished
with the operation $\diffuse{\otimes}$, called {\em diffusion}, which
is directly derived from the merge function $\otimes$.  That is, while
$T\otimes\Delta T$ returns a new complete state by merging $T$ with
$\Delta T$, the operation $T\diffuse{\otimes}\Delta T$ returns a new
$\Delta T$ that contains only the part of $T\otimes\Delta T$ that is
different from $T$. Although $T\diffuse{\otimes}\Delta T$ is larger
than $\Delta T$, it is often far smaller than $T\otimes\Delta T$.  For
instance, if $T$ were a bag, $T\diffuse{\otimes}\Delta T$ would have
been a subset of $T\otimes\Delta T$.  Based on this analysis, we are
using the following approximate algorithm to compute the iteration
incrementally:
\[\begin{array}{l}
\Delta X\;\leftarrow\; g(\overline{\Delta S})\\
\mathbf{for}\;\;i\leftarrow 1\ldots n\\
\skiptext{for(}\Delta T\;\leftarrow\;T\,\diffuse{\otimes}\,h(\Delta X,\overline{\Delta S})\\
\skiptext{for(}\Delta X\;\leftarrow\;a(\Delta T)\\
T\;\leftarrow\;T\otimes\Delta T\\
\mathbf{return}\; a(T)
\end{array}\]
The diffusion operator $\diffuse{\otimes}$ must satisfy the property:
\[T\otimes(T\diffuse{\otimes}\Delta T)=T\otimes(T\otimes\Delta T)\]
that is, when $T\diffuse{\otimes}\Delta T$ and $T\otimes\Delta T$
are merged with $T$, they give the same answer, because $T\diffuse{\otimes}\Delta T$
discards the parts of $T$ that are not joined with $\Delta T$,
but these parts are embedded back in the answer when we merge with $T$.

\ignore{
\[T\otimes(T\diffuse{\otimes}h(a(\Delta T),\overline{\Delta S}))\;=\;
h(a(T\otimes\Delta T),\overline{S\uplus\Delta S})\]
To prove that this algorithm calculates the correct $X$, we prove by
induction that $X_i=a(T\otimes\Delta T_i)$, where the subscript $i$
refers to the value at the $i$th step: $a(T\otimes\Delta
T_i)=a(T\otimes(T\diffuse{\otimes}h(\Delta X_{i-1},\overline{\Delta
S}))) =a(h(a(T\otimes\Delta T_{i-1}),\overline{S\uplus\Delta
S}))=a(h(X_{i-1},\overline{S\uplus\Delta S}))=X_i$.
}

For example, PageRank is an iterative algorithm that calculates the
PageRank of a graph node as the sum of the incoming
PageRank contributions from its neighbors, while its own PageRank is equally
distributed to its outgoing neighbors. The PageRank query expressed in
MRQL is a simple self-join on the graph, which is optimized into a
single group-by operation~\cite{edbt12}.  For PageRank, the state
merging $\otimes$ is a full outer join that incorporates the new
PageRank contributions to the existing PageRanks.  The diffusion
operation $T\,\diffuse{\otimes}\,\Delta T$, on the other hand,
propagates the new PageRank contributions from $\Delta T$ to $T$, that
is, from the nodes in $\Delta T$ to their immediate outgoing neighbors
in both $T$ and $\Delta T$.  Thus, at each iteration step, $\Delta T$
is expanded to $T\,\diffuse{\otimes}\,\Delta T$, growing one level at
a time, in a way similar to breadth-first-search. Consequently, our
approximate algorithm propagates PageRanks up to depth $n$, starting
from the $\Delta T$ nodes, and only the affected nodes will be part of
the new $\Delta T$. The $\diffuse{\otimes}$ operation is similar to
$\otimes$, but with a right-outer join instead of a full outer join.
That way, the data from $T$ that are not joined with $\Delta T$ will
not appear in the new $\Delta T$.

A more complete example is the following query that implements the k-means
clustering algorithm by repeatedly deriving $k$ new centroids from the old:
\begin{lstlisting}
repeat centroids = ...
  step select < X: avg(s.X), Y: avg(s.Y) >
         from s in #$S_1$#
        group by k: (select c from c in centroids
                      order by distance(c,s))[0]
\end{lstlisting}
where $S_1$ is the input stream of points on the X-Y plane,
\s{centroids} is the current set of centroids ($k$ cluster centers),
and \s{distance} is a function that calculates the distance between
two points. The initial value of \s{centroids} (the \s{...} value) can
be a bag of $k$ random points. The inner select-query in the group-by
part assigns the closest centroid to a point \s{s} (where \s{[0]}
returns the first tuple of an ordered list). The outer select-query in
the repeat step clusters the data points by their closest centroid,
and, for each cluster, a new centroid is calculated from the average
values of its points.
As in the previous join-groupBy example, the average value of a bag of values is decomposed into a pair 
that contains the sum and the count of values.
That is, the state is a bag of $\{(k, ((sx,cx),(sy,cy)))\}$
so that the centroids are the bag of points $\{(sx/cx,sy/cy)\}$
and $k$ is the lineage (the group-by key), which is the X-Y coordinates of a centroid.
Consequently, the answer query that returns the final result (the centroids) is:
\begin{lstlisting}
#$a$#(state) = select < X: sx/cx, Y: sy/cy >
             from (k,((sx,cx),(sy,cy))) in state
\end{lstlisting}
(it does not require a group-by since $k$ is the only lineage key),
while $h(X,S_1)$ is:
\begin{lstlisting}
select (k, ( (sum(s.X),count(s.X)),
             (sum(s.Y),count(s.Y)) ))
  from s in #$S_1$#
 group by k: (select c from c in #$X$#
               order by distance(c,s))[0]
\end{lstlisting}
The merge function $\otimes$ is a full outer join, similar to
the one used by the join-groupBy example.  The diffusion operation
$X\diffuse{\otimes}Y$ though is a right-outer join that discards those
state data that do not join with the new data. That is, it is equal to
the $X\otimes Y$ query without the last union:
\begin{lstlisting}
#$X\,\diffuse{\otimes}\, Y$# = select (k, (sx+sy,cx+cy))
           from (k,(sx,cx)) in #$X$#,
                 (k,(sy,cy)) in #$Y$#
          union select y from y in #$Y$#
                 where #$\pi_1$#(y) not in #$\pi_1(X)$#
\end{lstlisting}

\extended{
But, suppose that one of the group-by or join keys in a query is a
floating point number. This is the case with the previous k-means
clustering query, because it groups the points by their closest
centroids, which contain floating point numbers. The group-by
operation itself is not a problem because the centroids are fixed
during the group-by. The problem arises when merging the current state
with the results of processing the new data. For the k-means query
example, the merge function $\otimes$ is an equi-join whose join
attribute is a centroid, so that the sum and count values associated
with the same centroid are brought together from the join inputs and
are accumulated. Since the join condition is over attributes with
floating point numbers, the join condition will fail in most
cases. This problem becomes even worse for the approximate solution,
because it uses different sets of centroids $X$ and $X_{prev}$ when
states are merged. Most iterative queries do not have this
problem. The lineage in PageRank, for example, is the node ID, which
remains invariant across iterations. For these uncommon queries, such
as k-means, that have floating point numbers in their
join/group-by/order-by attributes, we use yet another approximation:
the join is done based on an ``approximate equality" where two
floating point numbers are taken to be equal if their difference is
below some given threshold. This works well for our approximate
solution for iteration because it is based on the assumption that the
new solution $X$ is approximately equal to the previous one,
$X_{prev}$.
}

\section{Related Work}\label{related-work}

New frameworks in distributed Big Data analytics have become essential
tools to large-scale machine learning and scientific discoveries.
Among these frameworks, the Map-Reduce programming
model~\cite{dean:osdi04} has emerged as a generic, scalable, and cost
effective solution for Big Data processing on clusters of commodity
hardware. One of the major drawbacks of the Map-Reduce model is that,
to simplify reliability and fault tolerance, it does not preserve data
in memory between the map and reduce tasks of a Map-Reduce job or
across consecutive jobs, which imposes a high overhead to complex
workflows and graph algorithms, such as PageRank, which require
repetitive Map-Reduce jobs. Recent systems for cloud computing use
distributed memory for inter-node communication, such as the main
memory Map-Reduce (M3R~\cite{shinnar:vldb12}), Apache Spark~\cite{spark},
Apache Flink~\cite{flink}, Piccolo~\cite{piccolo:osdi10}, and distributed
GraphLab~\cite{graphlab:vldb12}. Another alternative framework to the
Map-Reduce model is the Bulk Synchronous Parallelism (BSP) programming
model~\cite{valiant:cacm90}. The best known implementations of the BSP
model for data analysis on the cloud are Google's
Pregel~\cite{pregel:podc09}, Apache Giraph~\cite{giraph}, and Apache
Hama~\cite{hama}.

Although the Map-Reduce framework was originally designed for batch
processing, there are several recent systems that have extended
Map-Reduce with online processing capabilities. Some of these systems
build on the well-established research on data streaming based on
sliding windows and incremental operators~\cite{babcock:pods02}, which
includes systems such as Aurora~\cite{aurora} and
Telegraph~\cite{telegraphCQ}. MapReduce Online~\cite{MRonline}
maintains state in memory for a chain of MapReduce jobs and reacts
efficiently to additional input records. It also provides a
memoization-aware scheduler to reduce communication across a
cluster. Incoop~\cite{incoop} is a Hadoop-based incremental processing
system with an incremental storage system that identifies the
similarities between the input data of consecutive job runs and splits
the input based on the similarity and file content. 
$i^2$MapReduce~\cite{i2mapreduce:tkde15} implements incremental iterative Map-Reduce jobs
using a store, MRB-Store, that maps input values to the reduce output values.
This store is used for detecting delta changes and propagating these changes to the output.
Google's Percolator~\cite{peng:osdi10} is a system based on BigTable for
incrementally processing updates to a large data set. It updates an
index incrementally as new documents are crawled. Microsoft
Naiad~\cite{naiad} is a distributed framework for cyclic dataflow
programs that facilitates iterative and incremental computations. It
is based on differential dataflow computations, which allow
incremental computations to have nested iterations.
CBP~\cite{logothetis:socc12} is a continuous bulk processing system on
Hadoop that provides a stateful group-wise operator that allows users
to easily store and retrieve state during the reduce stage as new data
inputs arrive. Their incremental computing PageRank implementation is
able to cut running time in half. REX~\cite{mihaylov:vldb12} handles
iterative computations in which changes in the form of deltas are
propagated across iterations and state is updated efficiently. In
contrast to our automated approach, REX requires the programmer to
explicitly specify how to process deltas, which are handled as first
class objects. Trill~\cite{trill} is a high throughput, low latency
streaming query processor for temporal relational data, developed at
Microsoft Research. The Reactive Aggregator~\cite{tangwongsan:vldb15},
developed at IBM Research, is a new sliding-window streaming engine
that performs many forms of sliding-window aggregation incrementally.
In addition to these general data analysis engines, there are many
data analysis algorithms that have been implemented incrementally,
such as incremental pagerank~\cite{desikan:www05}.\extended{
Finally, the
incremental query processing is related to the problem of incremental
view maintenance, which has been extensively studied in the context of
relational views (see~\cite{gupta:bde95} for a literature survey).

Many novel Big Data stream processing systems, also known as
distributed stream processing engines (DSPEs), have emerged recently.
The most popular one is Twitter's Storm~\cite{peng:osdi10}, which is
now part of the Apache ecosystem for Big Data analytics. It provides
primitives for transforming streams based on a user-defined topology,
consisting of spouts (stream sources) and bolts (which consume input
streams and may emit new streams). Other popular DSPE platforms
include Spark's D-Streams~\cite{spark:sosp13}, Flink
Streaming~\cite{flink}, Apache S4~\cite{S4}, and Apache
Samza~\cite{samza}.

In programming languages, self-adjusting
computation~\cite{acar:toplas09} refers to a technique for compiling
batch programs into programs that can automatically respond to changes
to their data. It requires the construction of a dependence graph at
run-time so that when the computation data changes, the output can be
updated by re-evaluating only the affected parts of the
computation. In contrast to our work, which requires only the state to
reside in memory, self-adjusting computation expects both the input
and the output of a computation to reside in memory, which makes it
inappropriate for unbounded data in a continuous stream.  Furthermore,
such dynamic methods impose a run-time storage and computation
overhead by maintaining the dependence graph. The main idea
in~\cite{acar:toplas09} is to manually annotate the parts of the input
type that is changeable, and the system will derive an incremental
program automatically based on these annotations. Each changeable
value is wrapped by a mutator that includes a list of reader closures
that need to be evaluated when the value changes. A read operation on
a mutator inserts a new closure, while the write operation triggers
the evaluation of the closures, which may cause writes to other
mutators, etc, resulting to a cascade of closure execution triggered
by changed data only. This technique has been extended to handle
incremental list insertions (like our work), but it requires the
rewriting of all list operations to work on incremental
lists. Recently, there is a proof-of-concept implementation of this
technique on map-reduce~\cite{acar:ddfp13}, but it was tested on a
serial machine. It is doubtful that such dynamic techniques can be
efficiently applied to a distributed environment, where a write in one
compute node may cause a read in another node. Finally, there is
recent work on static incrementalization based on
derivatives~\cite{cai:pldi14}. In contrast to our work, it assumes
that the merge function that combines the previous result with the
result on the delta changes uses exactly the same delta changes, a
restriction that excludes aggregations and group-bys.
}

\section{Earlier Work: MRQL}\label{mrql-language}

Apache MRQL~\cite{mrql} is a query processing and optimization system
for large-scale, distributed data analysis.  MRQL was originally
developed by the author~(\cite{edbt12,webdb11}), but is now
an Apache incubating project with many developers and users worldwide.
The MRQL language is an SQL-like query language for large-scale data
analysis on computer clusters. The MRQL query processing system can
evaluate MRQL queries in four modes: in Map-Reduce mode using Apache
Hadoop~\cite{hadoop}, in BSP mode (Bulk Synchronous Parallel model)
using Apache Hama~\cite{hama}, in Spark mode using Apache
Spark~\cite{spark}, and in Flink mode using Apache
Flink~\cite{flink}. The MRQL query language is powerful enough to
express most common data analysis tasks over many forms of raw in-situ
data, such as XML and JSON documents, binary files, and CSV documents.
The design of MRQL has been influenced by XQuery and ODMG OQL, although
it uses SQL-like syntax.  In fact, when restricted to XML, MRQL is as
powerful as XQuery.  MRQL is more powerful than other current
high-level Map-Reduce languages, such as Hive~\cite{hive} and
PigLatin~\cite{olston:sigmod08}, since it can operate on more complex
data and supports more powerful query constructs, thus eliminating the
need for using explicit procedural code. With MRQL, users are able to
express complex data analysis tasks, such as PageRank, k-means
clustering, matrix factorization, etc, using SQL-like queries
exclusively, while the MRQL query processing system is able to compile
these queries to efficient Java code that can run on various
distributed processing platforms.  For example, the PageRank query on
raw DBLP XML data, which ranks authors based on the number of citations they
have received from other authors, is 16 lines long~\cite{webdb11} and
can be executed on all the supported platforms as is, without changing
the query.

A recent extension to MRQL, called {\em MRQL Streaming}, supports the
processing of continuous MRQL queries over streams of batch data (that
is, data that come in continuous large batches).  Before the
incremental MRQL work presented in this paper, MRQL Streaming
supported traditional window-based streaming based on a fixed window
during a specified time interval.  Any batch MRQL query can be
converted to a window-based streaming query by replacing at least one
of the 'source' calls in the query that access data sources to
`stream' calls (with exactly the same call arguments).  For example,
the query:
\begin{lstlisting}
select (k,avg(p.Y))
from p in stream(binary,"points")
group by k: p.X
\end{lstlisting}
groups a stream of points by their $X$ coordinate and returns the
average $Y$ values in each group.  The MRQL Streaming engine
first processes all the existing sequence files in the directory
\s{points} and then checks this directory periodically for new
files. When new files are inserted in the directory, it processes
the new batch of data using distributed query processing.  MRQL
Streaming also supports a stream input format for listening to TCP
sockets for text input based on one of the MRQL Parsed Input Formats
(XML, JSON, CSV).  A query may work on multiple stream sources and
multiple batch sources. If there is at least one stream source, the
query becomes continuous (it never stops). The output of a continuous
query is stored in a file directory, where each file contains the
results of processing each batch of streaming data.  Currently, MRQL
Streaming works on Spark Streaming only but there are current efforts
to add support for Storm and Flink Streaming in the near future.  The
work reported here, called {\em Incremental MRQL}, extends the current
MRQL Streaming engine with incremental stream processing.

\section{The MRQL Algebra}\label{mrql-algebra}

Our compiler translates queries to algebraic terms and then uses
rewrite rules to put these algebraic terms into a homomorphic form,
which is then used to compute the query results incrementally by
combining the previous results with the results of processing the
incremental batches.

The MRQL algebra described in this section is a variation of the
algebra presented in our previous work~\cite{edbt12}, but is more
suitable for describing our incremental methods. The relational
algebra, the nested relational algebra, as well as many other database
algebras can be easily translated to our algebra. Our algebra
consists of a small number of higher-order homomorphic
operators~\cite{edbt12}, which are defined using structural recursion
based on the union representation of bags~\cite{tods00}. Monoid
homomorphisms capture the essence of many divide-and-conquer
algorithms and can be used as the basis for data
parallelism~\cite{tods00}.

The first operator, cMap (also known as concat-map or flatten-map in
functional programming languages), generalizes the select, project,
join, and unnest operators of the nested relational algebra. Given
two arbitrary types $\alpha$ and $\beta$, the operation
$\mathrm{cMap}(f,X)$ maps a bag $X$ of type $\{\alpha\}$ to a bag of
type $\{\beta\}$ by applying the function $f$ of type
$\alpha\rightarrow\{\beta\}$ to each element of $X$, yielding one bag
for each element, and then by merging these bags to form a single
bag of type $\{\beta\}$. Using a set former notation on bags, it is expressed as:
\begin{equation}\label{cmap-def}
\mathrm{cMap}(f,X)\;\defined\; \{\,z\,|\,x\in X,\,z\in f(x)\,\}
\end{equation}
or, alternatively, using structural recursion:
\[\begin{array}{rcl}
\mathrm{cMap}(f,X\uplus Y) & = & \mathrm{cMap}(f,X)\,\uplus\,\mathrm{cMap}(f,Y)\\
\mathrm{cMap}(f,\{a\}) & = & f(a)\\
\mathrm{cMap}(f,\{\,\}) & = & \{\,\}
\end{array}\]
Given an arbitrary type $\kappa$ that supports value equality ($=$),
an arbitrary type $\alpha$, and a bag $X$ of type $\{(\kappa,\alpha)\}$, the
operation $\mathrm{groupBy}(X)$ groups the elements of the bag $X$ by
their first component and returns a bag of type
$\{(\kappa,\{\alpha\})\}$.
For example, groupBy(\{ (1,``A"), (2,``B"), (1,``C") \})
returns \{(1,\{``A",``C"\}), (2,\{``B"\})\}.
The groupBy operation cannot be defined using a set former notation,
but can be defined using structural recursion:
\[\begin{array}{rcl}
\mathrm{groupBy}(X\uplus Y) & = & \mathrm{groupBy}(X)\,\gb_\uplus\,\mathrm{groupBy}(Y)\\
\mathrm{groupBy}(\{(k,a)\}) & = & \{(k,\{a\})\}\\
\mathrm{groupBy}(\{\,\}) & = & \{\,\}
\end{array}\]
where the parametric monoid $\gb_\oplus$ is a full outer join that merges groups associated with
the same key using the monoid $\oplus$ (equal to $\uplus$ for
groupBy):
\begin{eqnarray}
X\,\gb_\oplus\,Y & \defined & \{\,(k,a\oplus b)\,|\,(k,a)\in X,\,(k,b)\in Y\,\}\nonumber\\
&& \uplus\;\{\,(k,a)\,|\,(k,a)\in X,\,k\not\in\pi_1(Y)\,\}\label{gb-def}\\
&& \uplus\;\{\,(k,b)\,|\,(k,b)\in Y,\,k\not\in\pi_1(X)\,\}\nonumber
\end{eqnarray}
where $\pi_1(X)=\{\,k\,|\,(k,x)\in X\,\}$.
In other words, the monoid $\gb_\uplus$ constructs a set of pairs whose unique key is the first pair element.
In fact, any bag $X$ can be converted to a set
using $\pi_1(\mathrm{groupBy}(\mathrm{cMap}(\lambda x.\,\{(x,x)\},\,X)))$.
Note also that $\mathrm{groupBy}(X)$ is not the same as the nesting
$\{\,(k,\{\,y\,|\,(k',y)\in X,\,k=k'\,\})\,|\,(k,x)\in X\,\}$,
as the latter contains duplicate entries for the key $k$.
Unlike nesting, unnesting a groupBy returns the input bag (proven in \short{the extended version
of this paper~\cite{extended-stream}}\extended{Appendix~\ref{proofs}}):
\begin{equation}\label{groupby-unnest}
\{\,(k,v)\,|\,(k,s)\in\mathrm{groupBy}(X),\,v\in s\,\}\; =\; X
\end{equation}
\ignore{Other examples of $\gb_\oplus$ are key-value maps and vectors, constructed with the monoid $\gb_\rhd$,
where $x\rhd y=y$, that is, the new key value replaces the previous one.}

Although any join $X\bowtie_p Y$ can be expressed as a nested cMap:
\[\mathrm{cMap}(\lambda x.\,\mathrm{cMap}(\lambda y.\,\mathrm{if}\,p(x,y)\,\mathrm{then}\,\{(x,y)\}\,\mathrm{else}\,\{\,\},\,Y),\,X)\]
this term is not always a homomorphism on both inputs.
Instead, MRQL provides a special homomorphic operation for equi-joins and outer joins,
$\mathrm{coGroup}(X,Y)$, between a bag $X$ of type $\{(\kappa,\alpha)\}$
and a bag $Y$ of type $\{(\kappa,\beta)\}$ over their first component of a type $\kappa$,
which returns a bag of type $\{(\kappa,(\{\alpha\},\{\beta\}))\}$:
\[\begin{array}{rcl}
\mathrm{coGroup}(X_1\uplus X_2,Y_1\uplus Y_2) & = &\mathrm{coGroup}(X_1,Y_1)\\
&&\gb_{\uplus\times\uplus}\,\mathrm{coGroup}(X_2,Y_2)\\[1.5ex]
\mathrm{coGroup}(\{(k,a)\},\{(k,b)\}) & = & \{(k,(\{a\},\{b\}))\}\\[1.5ex]
\mathrm{coGroup}(\{(k,a)\},\{(k',b)\}) & = & \{(k,(\{a\},\{\,\})),\\
&&\skiptext{$\{$}(k',(\{\,\},\{b\}))\}\\[1.5ex]
\mathrm{coGroup}(\{(k,a)\},\{\,\}) & = & \{(k,(\{a\},\{\,\}))\}\\[1.5ex]
\mathrm{coGroup}(\{\,\},\{(k,b)\}) & = & \{(k,(\{\,\},\{b\}))\}\\[1.5ex]
\mathrm{coGroup}(\{\,\},\{\,\}) & = & \{(k,(\{\,\},\{\,\}))\}
\end{array}\]
where the product of two monoids, $\oplus\times\otimes$ is a monoid that,
when applied to two pairs $(x_1,x_2)$ and $(y_1,y_2)$, returns
$(x_1\oplus y_1,x_2\otimes y_2)$.
That is, the monoid $\gb_{\uplus\times\uplus}$ merges two bags of type $\{(\kappa,(\{\alpha\},\{\beta\}))\}$
by unioning together their $\{\alpha\}$ and $\{\beta\}$ values that correspond to the same key $\kappa$.
For example, coGroup(\{ (1,``A"), (2,``B"), (1,``C") \}, \{ (1,``D"), (2,``E"), (3,``F") \})
returns \{(1,(\{``A",``C"\},\{``D"\})), (2,(\{``B"\},\{``E"\})), (3,(\{\,\},\{``F"\}))\}.
It can be proven (with a proof similar to that of Equation~(\ref{groupby-unnest})) that both coGroup inputs can be derived from the coGroup result:
\[\begin{array}{lcl}
\{\,(k,x)\,|\,(k,(s_1,s_2))\in\mathrm{coGroup}(X,Y),\,x\in s_1\,\}  & = & X\\
\{\,(k,y)\,|\,(k,(s_1,s_2))\in\mathrm{coGroup}(X,Y),\,y\in s_2\,\}  & = & Y
\end{array}\]
and a coGroup is equivalent to a groupBy if one of the inputs is empty:
\[\begin{array}{lcl}
\mathrm{coGroup}(X,\{\,\})& = & \{\,(k,(s,\{\,\}))\,|\,(k,s)\in\mathrm{groupBy}(X)\,\}
\end{array}\]

Aggregations are captured by the operation $\mathrm{reduce}(\oplus,X)$,
which aggregates a bag $X$ using a commutative monoid $\oplus$. For
example, $\mathrm{reduce}(+,\{1,2,3\})=6$. This operation
is in fact a general homomorphism that can be defined on any monoid $\oplus$
with an identity function. We have:
\[\begin{array}{rcl}
\mathrm{reduce}(\uplus,X) & = & X\\
\mathrm{reduce}(\gb_\uplus,X) & = & \mathrm{groupBy}(X)
\end{array}\]
Finally, iteration
$\mathrm{repeat}(F,n,X)$ over the bag $X$ of type $\{\alpha\}$ applies
$F$ of type $\{\alpha\}\rightarrow\{\alpha\}$ to $X$ $n$ times, yielding a bag
of type $\{\alpha\}$:
\[\mathrm{repeat}(F,n,X)=F^n(X)=F(F(\ldots F(X)))\]
An iteration is a homomorphism as long as $F$ is a homomorphism,
that is, when $F(X\uplus Y)=F(X)\uplus F(Y)$.

\begin{definition}[MRQL Algebra]\label{algebra}
The MRQL algebra consists of terms that take the following form:
\[\begin{array}{llll}
e,e_1,e_2 & ::= & \mathrm{cMap}(f,e) & \mbox{flatten-map}\\
&|&\mathrm{groupBy}(e) & \mbox{group-by}\\
&|&\mathrm{coGroup}(e_1,e_2) & \mbox{join}\\
&|&\mathrm{reduce}(\oplus,e) & \mbox{aggregation}\\
%&|&(e_1,e_2) & \mbox{pair}\\
%&|&const & \mbox{constant}\\
&|& S_i & \mbox{stream source}
\end{array}\]
where $\oplus$ is a monoid on basic types, such as $+$, $*$, $or$, $and$, etc.
Function $f$ is an anonymous function that may contain such algebraic terms but is not
permitted to contain any reference to a stream source, $S_i$.
\end{definition}

\noindent
The restriction on $f$ in Definition~\ref{algebra} excludes non-equi-joins, such as cross products, which require
a nested cMap in which the inner cMap is over a data source.
This algebra does not include iteration, $\mathrm{repeat}(F,n,e)$.
Iterations have been discussed in Section~\ref{approach}.
In addition, for brevity, this algebra does not include terms for non-streaming
input sources, general tuple and record construction and projection, bag union, singleton
and empty bag, arithmetic operations, if-then-else expressions,
boolean operations, etc.

In addition to these operations, there are a few more algebraic
operations that can be expressed as homomorphisms, such as
$\mathrm{orderBy}(X)$, which is a groupBy followed by a sorting over
the group-by key. This operation returns a list, which is represented
by the non-commutative monoid list-append, \app. Mixing multiple
collection monoids and operations in the same algebra has been
addressed by our previous work~\cite{tods00}, and is left out from
this paper to keep our analysis simple.

For example, the query $q(S_1,S_2)$ used as the first example in Section~\ref{approach},
is translated to the following algebraic term $Q(S_1,S_2)$:
\[\begin{array}{l}
\mathrm{cMap}(\lambda(k,s).\,\{(k,avg(s))\},\\
\skiptext{$\mathrm{cM}($}\mathrm{groupBy}(\mathrm{cMap}(\lambda(j,(xs,ys)).\,g(xs,ys),\,\\
\skiptext{$\mathrm{cM}(\mathrm{groupBy}(\mathrm{cMap}($}\mathrm{coGroup}(\mathrm{cMap}(\lambda x.\,\{(x.B,x)\},\,S_1),\\
\skiptext{$\mathrm{cM}(\mathrm{groupBy}(\mathrm{cMap}(\mathrm{coGroup}($}\mathrm{cMap}(\lambda y.\,\{(y.C,y)\},\,S_2)))))
\end{array}\]
where $avg(s)=\mathrm{reduce}(+,s)/\mathrm{reduce}(+,\mathrm{cMap}(\lambda v.\,\{1\},s))$\linebreak
and $g(xs,ys)=\mathrm{cMap}(\lambda x.\,\mathrm{cMap}(\lambda y.\,\{(x.A,y.D)\},\,ys),\,xs)$.

Although all algebraic operators used in MRQL are homomorphisms, their
composition may not be. For instance,
$\mathrm{cMap}(f,\mathrm{groupBy}(X))$ is not a homomorphism for
certain functions $f$, because, in general, cMap does not distribute
over $\gb_\uplus$. One of our goals is to transform any composition of
algebraic operations into a homomorphism.

\section{Query Normalization}\label{normalization}

In an earlier work~\cite{edbt12}, we have presented general algorithms for unnesting
nested queries. For example, consider the following nested query:
\begin{lstlisting}
select x from x in X
where x.D > sum(select y.C from y in Y where x.A=y.B)
\end{lstlisting}
A typical method for evaluating this query in a relational system is to first
group \s{Y} by \s{y.B}, yielding pairs of \s{y.B} and \s{sum(y.C)},
and then to join the result with \s{X} on \s{x.A=y.B} using a
left-outer join, removing all those matches whose \s{x.D} is below
the sum. But, in our framework, this query is translated into:

\begin{tab}
cMap( \=\+$\lambda$(k,(xs,ys)).\,cMap( \=\+$\lambda$x.\,\=\+{\bf if} x.D $>$ reduce(+,ys)\\
 {\bf then} \{x\}\\
 {\bf else} \{\,\},\, xs),\-\-\\
coGroup( \=\+cMap( $\lambda$x.\,\{(x.A,x)\}, X ),\\
         cMap( $\lambda$y.\,\{(y.B,y.C)\}, Y ) ) )
\end{tab}

\noindent
That is, the query unnesting is done with a left-outer join, which is
captured concisely by the coGroup operation without the need for using
an additional group-by operation or handling null values. This
unnesting technique was generalized to handle arbitrary nested
queries, at any place, number, and nesting level (the reader is
referred to our earlier work~\cite{edbt12} for details).

The algebraic terms derived from MRQL queries can be normalized using the following rule:
\begin{equation}
\mathrm{cMap}(f,\mathrm{cMap}(g,S)) \rightarrow \mathrm{cMap}(\lambda x.\,\mathrm{cMap}(f,g(x)),S)\label{cmap-opt}
\end{equation}
which fuses two cascaded cMaps into a nested cMap, thus avoiding 
the construction of the intermediate bag.
This rule can be proven directly from the cMap definition in Equation~(\ref{cmap-def}):
\[\begin{array}{l}
\mathrm{cMap}(f,\mathrm{cMap}(g,S))\\
=\;\{\,z\,|\, w\in\{\,y\,|\,x\in S,\,y\in g(x)\,\},\, z\in f(w)\,\}\\
=\;\{\,z\,|\, x\in S,\,y\in g(x),\, z\in f(y)\,\}\\
=\;\{\,z\,|\, x\in S,\,z\in\{\,w\,|\,y\in g(x),\, w\in f(y)\,\}\,\}\\
=\;\mathrm{cMap}(\lambda x.\,\mathrm{cMap}(f,g(x)),S)%\hspace*{23ex}\Box
\end{array}\]
If we apply the transformation~(\ref{cmap-opt}) repeatedly, 
and given that we can always use the identity
$\mathrm{cMap}(\lambda x.\{x\},X)=X$ in places where there is no cMap between groupBy/coGroup operations,
any algebraic terms in Definition~\ref{algebra}
can be normalized into the following form:

\begin{definition}[Normalized MRQL Algebra]\label{normalized-algebra}
The normalized MRQL algebra consists of terms $q$ that take the following form:
\[\begin{array}{rlll}
q,q_1,q_2 & ::= & %const
%&|&(q_1,\,q_2)\\
\mathrm{reduce}(\oplus,\,c) &\mbox{the query header}\\
&|&c\\[1ex]
c,c_1,c_2 & ::= & \mathrm{cMap}(f,\,e)\\[1ex]
e & ::= & \mathrm{groupBy}(c) & \mbox{the query body}\\
&|&\mathrm{coGroup}(c_1,\,c_2)\\
&|& S_i
\end{array}\]
where function $f$ is an anonymous function that does not contain any reference to a stream source,
$S_i$. 
\end{definition}

\noindent
The query body is a tree of groupBy/coGroup operations connected via
cMaps.

\begin{figure*}
\setlength{\jot}{2ex}
\begin{subequations}
\leoframe{\begin{minipage}[l]{0.4\linewidth}
\vspace*{-2ex}\begin{gather}
\rho\vdash v:\rho(v)\label{bind}\\
\frac{\rho\vdash e:\Box}{\forall v\in e:\,\rho\vdash v:\Box}\label{box}\\
%\frac{\rho\vdash (x,y):\oplus\times\otimes}{\rho\vdash x:\oplus,\;\rho\vdash y:\otimes}\label{pair}\\
\frac{\rho\vdash\mathrm{reduce}(\oplus,X):\oplus}{\rho\vdash X:\uplus}\label{reduce}\\
\frac{\rho\vdash (X\uplus Y):\uplus}{\rho\vdash X:\uplus,\;\rho\vdash Y:\uplus}
\end{gather}
\end{minipage}
\begin{minipage}[l]{0.55\linewidth}
\begin{gather}
\frac{\rho\vdash\mathrm{cMap}(f,X):\uplus}{\rho\vdash X:\uplus}\label{cmap-box}\\
\frac{\rho\vdash\mathrm{groupBy}(X):\,\gb_\uplus}{\rho\vdash X:\uplus}\label{groupby-monoid}\\
\frac{\rho\vdash\mathrm{coGroup}(X,Y):\,\gb_{\uplus\times\uplus}}{\rho\vdash X:\uplus,\;\rho\vdash Y:\uplus}\label{cogroup-monoid}\\
\frac{\rho\vdash\mathrm{cMap}(f,X):\,\gb_\otimes}{\rho\vdash X:\,\gb_\oplus,\;\rho[k:\Box,s:\oplus]\vdash f(k,s):\,\gb_\otimes}\label{cmap-groupby}\\[-2ex]\nonumber
\end{gather}
\end{minipage}}
\end{subequations}
\setlength{\jot}{1ex}
\caption{Monoid Inference Rules}\label{monoid-infer}
\end{figure*}

\section{Monoid Inference}\label{monoid-inference}

One of our tasks is, given an algebraic term
$f(\overline{S})$, where an $S_i\in\overline{S}$ is a
streaming data source, to prove that $f$ is a homomorphism by deriving a monoid
$\otimes$ such that:
\begin{equation}
f(S_1\uplus S'_1,\ldots,S_n\uplus S'_n)=f(S_1,\ldots,S_n)\otimes f(S'_1,\ldots,S'_n)\label{monoid-eq}
\end{equation}
We have developed a monoid inference system, inspired by type inference systems
used in programming languages.
We use the judgment $\rho\vdash e:\oplus$ to indicate that $e$ is
a monoid homomorphism with a merge function $\oplus$ under the
environment $\rho$, which binds variables to monoids.
The notation
$\rho(v)$ extracts the binding of the variable $v$, while
$\rho[v:\oplus]$ extends the environment with a new binding from $v$
to $\oplus$. 
Equation~(\ref{monoid-eq}) can now be expressed as the judgment:
\[[S_1\!:\uplus,\ldots,S_n\!:\uplus]\vdash f(\overline{S}):\otimes\]
If a term is invariant under change, such as an invariant data source,
it is associated with the special monoid $\Box$:
\[X\,\Box\,Y\; \defined\;
\left\{\begin{array}{ll}
X & \mbox{if $X=Y$}\\
\mbox{error} & \mbox{otherwise}
\end{array}\right.\]

Our monoid inference algorithm is a heuristic algorithm that annotates
terms with monoids (when possible). It is very similar to type inference. Most of our inference
rules are expressed as fractions: the denominator (below the line) contains the premises
(separated by comma) and the numerator (above the line) is the conclusion. For
example, the rule $\frac{\rho\,\vdash f(x):\,\otimes}{\rho\,\vdash
x:\,\oplus}$ indicates that $f(x_1\oplus x_2)=f(x_1)\otimes f(x_2)$.
Figure~\ref{monoid-infer} gives some of the inference rules. More
rules will be given in Lemma~\ref{smap-hom}.
Rules~(\ref{reduce}) through~(\ref{groupby-monoid}) are derived directly from the
algebraic definition of the operators. 
Rule~(\ref{bind}) retrieves the associated monoid of a variable $v$ from the environment $\rho$.
Rule~(\ref{box}) indicates that if all the variables in a term $e$ are invariant, then so is $e$.
Rule~(\ref{cmap-groupby}) indicates that a cMap over a groupBy is a homomorphism
as long as its functional argument is a homomorphism. It can be proven as follows:
\[\begin{array}{l}
\mathrm{cMap}(f,\,X\gb_\oplus Y)\\
=\{\,(\theta,z)\,|\,(k,s)\in(X\gb_\oplus Y),\,(\theta,z)\in f(k,s)\,\}\\
=\{\,(\theta,z)\,|\,(k,x)\in X,\,(k,y)\in Y,\,(\theta,z)\in f(k,x\oplus y)\,\}\uplus\cdots\\
=\{\,(\theta,z)\,|\,(k,x)\in X,\,(k,y)\in Y,\\
\skiptext{$=\{\,(\theta,z)\,|\,$}(\theta,z)\in (f(k,x)\gb_\otimes f(k,y))\,\}\uplus\cdots\\
=\{\,(\theta,z)\,|\,(k,x)\in X,\,(k,y)\in Y,\\
\skiptext{$=\{\,(\theta,z)\,|\,$}(\theta,z)\in\{\,(\theta,a\otimes b)\,|\,(\theta,a)\in f(k,x),\\
\skiptext{$=\{\,(\theta,z)\,|\,(\theta,z)\in\{\,(\theta,a\otimes b)\,|\,$}(\theta,b)\in f(k,y)\,\}\,\}\uplus\cdots\\
=\{\,(\theta,a\otimes b)\,|\,(k,x)\in X,\,(k,y)\in Y,\,(\theta,a)\in f(k,x),\\
\skiptext{$=\{\,(\theta,a\otimes b)\,|\,$}(\theta,b)\in f(k,y)\,\}\uplus\cdots\\
=\{\,(\theta,a\otimes b)\,|\,(\theta,a)\in\mathrm{cMap}(f,X),\,(\theta,b)\in\mathrm{cMap}(f,Y)\,\}\\
\skiptext{$=\{\,$}\uplus\cdots\;\hspace*{5ex}\mbox{(given that $k$ and $\theta$ are unique keys)}\\
=\mathrm{cMap}(f,X)\,\gb_\otimes\,\mathrm{cMap}(f,Y)
\end{array}\]

\section{Injecting Lineage Tracking}\label{injection}

There are two tasks that need to be accomplished to achieve our goal of
transforming an algebraic term into a homomorphism: 1) transform the
algebraic term in such a way that it propagates all keys used in joins and group-bys to
the query output, and 2) pull the non-homomorphic parts out of the algebraic term
so that it becomes a homomorphism. In this section, we address
the first task.

We will transform the algebraic terms given in
Definition~\ref{normalized-algebra} in such a way that they propagate
the join and the group-by keys. That is, each value $v$ returned by
these terms is annotated with a lineage $\theta$, as a pair
$(\theta,v)$, where $\theta$ takes the following form:
\ignore{
\[\begin{array}{rlll}
\theta,\theta_1,\theta_2 & ::= & (\theta_1,\theta_2)\\
       & | & k & \mbox{a groupBy or coGroup key}\\
       & | & () & \mbox{empty lineage}
\end{array}\]
}
\[\begin{array}{rlll}
\theta,\theta_1,\theta_2 & ::= & (\kappa,\theta) & \mbox{extended with groupBy key $\kappa$}\\
       & | & (\kappa,(\theta_1,\theta_2)) & \mbox{extended with coGroup key $\kappa$}\\
       & | & () & \mbox{empty lineage}
\end{array}\]
That is, the lineage $\theta$ of the query result $v$ is the tree of
the groupBy and coGroup keys that are used in deriving the result $v$
(one key for each groupBy and coGroup operation).  The lineage tree
$\theta$ has the same shape as the groupBy/coGroup tree of the query.
We transform a query $q$ in such a way that, if the output of the query is
$\{t\}$ for some type $t$, then the transformed query will have output
$\{(\theta,t)\}$.  Furthermore, if the the output is a
non-collection type $t$, then the transformed query will also have
output $\{(\theta,t)\}$, which separates the contributions to $t$
associated with each combination of group-by/join keys.

\begin{algorithm}[Lineage Annotation]\label{annotation-alg}\ \\
{\bf Input:} a normalized query term $q$ defined in Definition~\ref{normalized-algebra}\\
{\bf Output:} a term $\trq{q}$ annotated with a lineage $\theta$

\noindent
\framebox{\hspace*{-1.5ex}\parbox[t]{\columnwidth}{
\begin{subequations}
\begin{align}
%\trq{const} & = \{\,((),const)\,\}\label{norm0}\\
%\trq{(q_1,\,q_2)} & = (\trq{q_1},\,\trq{q_2})\label{norm1}\\
%\trq{(q_1,\,q_2)} = \{\,((\theta_1,\theta_2&),(v_1,v_2))\nonumber\\[-1ex]
%|\,(\theta_1,v_1)&\in\trq{q_1},\,(\theta_2,v_2)\in\trq{q_2})\,\}\label{norm1}\\
\setlength{\jot}{1.5ex}
\trq{\mathrm{reduce}(\oplus,\,\mathrm{cMap}(f&,\,S_i))}\nonumber\\[-1ex]
 = \{((),&\,\mathrm{reduce}(\oplus,\,\mathrm{cMap}(f,\,S_i)))\}\label{norm2a}\\
\trq{\mathrm{reduce}(\oplus,\,\mathrm{cMap}(f&,\,e))}\nonumber\\[-1ex]
 = \mathrm{reduce}&(\gb_\oplus,\,\mathrm{sMap1}(f,\,\tre{e}))\label{norm2}\\
%\trq{\mathrm{cMap}(f,\,S_i)} &= \{\,((),\mathrm{cMap}(f,\,S_i))\,\}\label{norm3}\\
\trq{\mathrm{cMap}(f,\,S_i)} &= \{\,((),b)\,|\,a\in S_i,\,b\in f(a)\,\}\label{norm3}\\
\trq{\mathrm{cMap}(f,\,e)} & = \mathrm{sMap1}(f,\,\tre{e})\label{norm4}\\
\tre{\mathrm{groupBy}(c)} & = \mathrm{groupBy}(\mathrm{swap}(\trc{c}))\label{norm5}\\
\tre{\mathrm{coGroup}(c_1,c_2)} = & \;\mathrm{mix}(\mathrm{coGroup}(\trc{c_1},\trc{c_2}))\label{norm6}\\
\trc{\mathrm{cMap}(f,\,S_i)} & = \mathrm{sMap3}(f,\,S_i)\label{norm7}\\
\setlength{\jot}{1ex}
\trc{\mathrm{cMap}(f,\,e)} & = \mathrm{sMap2}(f,\,\tre{e})\label{norm8}
\end{align}
\end{subequations}}}\ \\[1ex]
where sMap1, sMap2, sMap3, swap, and mix are defined as follows:
\begin{subequations}
\begin{align}
\mathrm{sMap1}(f,X) &\defined \{\,((\theta,k),b)\,|\,((\theta,k),a)\in X,\nonumber\\
&\skiptext{$\defined \{\,((\theta,k),b)\,|\;\;$}b\in f(k,a)\,\}\label{smap1}\\
\mathrm{sMap2}(f,X) &\defined \{\,(k',((k,\theta),b))\,|\,((k,\theta),a)\in X,\nonumber\\
&\skiptext{$\defined \{\,(k',((k,\theta),b))\,|\;\;$}(k',b)\in f(k,a)\,\}\label{smap2}\\
\mathrm{sMap3}(f,X) &\defined \{\,(k,((),b))\,|\,a\in X,\,(k,b)\in f(a)\,\}\label{smap3}\\
\mathrm{swap}(X) &\defined \{\,((k,\theta),v)\,|\,(k,(\theta,v))\in X\,\}\label{swap}\\
\mathrm{mix}(X) &\defined \{\,((k,(\theta_x,\theta_y)),(xs,ys))\nonumber\\
&\skiptext{$\defined\;\;$}|\,(k,(s_1,s_2))\in X,\label{mix}\\
&\skiptext{$\defined\;|$}(\theta_x,xs)\in\mathrm{groupBy}(s_1),\nonumber\\
&\skiptext{$\defined\;|$}(\theta_y,ys)\in\mathrm{groupBy}(s_2)\,\}\nonumber
\end{align}
\end{subequations}
\end{algorithm}

\noindent
A query $q$ in our framework is transformed in such a way that it
propagates the lineage from the data stream sources to the query
output, starting with the empty lineage $()$ at the sources and
extended with the join and group-by keys.
The sMap1 operation is a cMap that propagates the input lineage
$\theta$ to the output as is.  The sMap2 operation is a cMap that
extends the input lineage $\theta$ with a groupBy/coGroup key $k$.
The lineage propagation is done by the cMap Rules~(\ref{norm3})
and~(\ref{norm8}).  Rule~(\ref{norm3}) applies to the outer query cMap
that produces the query output. It simply propagates the lineage from
the cMap input to the output. Rule~(\ref{norm8}) applies to a cMap
that returns the input of a groupBy or coGroup. The output of this
cMap must be a bag of key-value pairs, as is expected by a groupBy or
a coGroup. Thus, Rule~(\ref{norm8}) extends the lineage with a new key
and prepares the cMap output for the enclosing groupBy or
coGroup. This is done by translating cMap to sMap2.
Rule~(\ref{norm2}) indicates that a total aggregation becomes a
group-by aggregation by aggregating the values of each group
associated with a different lineage $\theta$.  Rule~(\ref{norm5})
translates a groupBy on a key to a groupBy on the entire lineage
(which includes the groupBy key).  Rule~(\ref{norm6}) translates a
coGroup on a key to a coGroup on the entire lineage, but it is done
using the function mix (defined in~(\ref{mix}) because the left input
lineage $\theta_x$ is not necessarily compatible with the right input
lineage $\theta_y$.  Given this, Rule~(\ref{norm6}) generates a
coGroup on the join key first, and then, for each join key $k$, it
groups the left and right join matches by $\theta_x$ and $\theta_y$
respectively, so that the output contains unique lineage key
combinations, $(k,(\theta_x,\theta_y))$.  Finally, Rule~(\ref{norm7})
annotates each value of the input stream $S_i$ with the empty lineage
$()$.

We will prove next that the transformed query $\trq{q}$ is a homomorphism.
We first prove that the generated sMap1 and sMap2 in $\trq{q}$ are homomorphisms, provided that
their functional arguments are homomorphisms. More specifically, we
prove the following judgments:
\begin{lemma}[Transformed Term Judgments]\label{smap-hom}
\setlength{\jot}{2ex}
\begin{subequations}
\begin{gather}
\frac{\rho\,\vdash\,\mathrm{sMap1}(f,\,X):\,\gb_\otimes}
{\rho\,\vdash X:\,\gb_\oplus,\,\rho[k:\Box,v:\oplus]\vdash f(k,v):\,\gb_\otimes}\label{smap1-rule}\\
\frac{\rho\vdash\mathrm{sMap2}(f,X):\uplus}
{\rho\,\vdash X:\,\gb_\oplus,\,\rho[k:\Box,v:\oplus]\vdash f(k,v):\uplus}\label{smap2-rule}\\
\frac{\rho\vdash\mathrm{sMap3}(f,X):\uplus}
{\rho\vdash X:\uplus}\label{smap3-rule}\\
\frac{\rho\vdash\mathrm{swap}(X):\uplus}
{\rho\,\vdash X:\uplus}\label{swap-rule}\\
\frac{\rho\,\vdash\,\mathrm{mix}(X):\,\gb_{\uplus\times\uplus}}
{\rho\,\vdash X:\,\gb_{\uplus\times\uplus}}\label{mix-rule}
\end{gather}
\end{subequations}
\setlength{\jot}{1ex}
\end{lemma}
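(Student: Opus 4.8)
The plan is to prove each of the five judgments by the same method used above to establish the rule for a cMap over a groupBy: unfold the set-former definition of the operation, substitute the merge of its argument on the free stream sources, and show that the operation redistributes into the claimed output merge. Note that the sMap1 and sMap2 judgments are \emph{conditional} — each carries a premise that its functional argument $f(k,v)$ is itself a homomorphism — so the real content is to reduce the homomorphism of the whole operation to that premise, exactly as the displayed cMap-over-groupBy derivation reduces to the homomorphism of its functional argument.

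Four of the five cases I expect to be routine. sMap1 is nothing but a cMap over a $\gb_\oplus$-structured bag whose function copies the lineage prefix $\theta$ unchanged and applies $f(k,\cdot)$ to the value, tagging each result with the carried key $(\theta,k)$; its derivation is therefore a verbatim replay of the cMap-over-groupBy computation with the compound key $(\theta,k)$ in the role of the bare group key $k$, terminating in the premise $\rho[k:\Box,v:\oplus]\vdash f(k,v):\gb_\otimes$. sMap2 has the same shape, but because its output is re-keyed by the fresh key produced by $f$ and merged only by $\uplus$, the expansion of $X_1\gb_\oplus X_2$ is cleaner: for a key $(k,\theta)$ present on both sides the merged value $a_1\oplus a_2$ is split by the $\uplus$-homomorphism of $f$ into $f(k,a_1)\uplus f(k,a_2)$, while keys present on only one side pass straight through, so the two sMap2 results union to the whole. sMap3 and swap are plain cMaps that perform no key-based value merging, so their being homomorphisms over $\uplus$ is immediate from the structural-recursion definition of cMap (equivalently, from the cMap rule that preserves $\uplus$).

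The substantive case, and the step I expect to be the genuine obstacle, is mix. The plan is to expand $\mathrm{mix}(X_1\gb_{\uplus\times\uplus}X_2)$ key by key. For a join key $k$ present in both inputs the coGroup value is $(s_1\uplus s_1',\,s_2\uplus s_2')$, so by the homomorphism of groupBy, $\mathrm{groupBy}(s_1\uplus s_1')=\mathrm{groupBy}(s_1)\gb_\uplus\mathrm{groupBy}(s_1')$ and likewise on the right, and the inner cross product that mix forms then ranges over the product of the two \emph{merged} lineage sets. On the other side, $\mathrm{mix}(X_1)\gb_{\uplus\times\uplus}\mathrm{mix}(X_2)$ contributes, for the same $k$, only the union of the two \emph{separate} products — the old-with-old combinations together with the new-with-new combinations.

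The difficulty is exactly the discrepancy between these two: expanding the merged product yields, in addition, the mixed combinations pairing an old-side lineage with a new-side lineage (and vice versa), which are not produced when the two mix results are merged independently. This is precisely the classical incremental-join cross term $A\bowtie\Delta B\,\uplus\,\Delta A\bowtie B$, and making it cancel is the crux of the whole argument. I would isolate it as a separate claim and try to show that, per join key $k$, the mixed old/new lineage products are empty or already appear among the unmixed ones — which is where the standing restriction that the join is not many-to-many must be brought in (and, if the bare judgment is not strong enough, sharpened), since that restriction is what bounds the lineage multiplicity per join key and is what would force the mixed products to coincide with the unmixed ones.
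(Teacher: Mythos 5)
Your treatment of sMap1, sMap2, sMap3, and swap matches the paper's proof: the first two are established by the same set-former expansion that replays the cMap-over-groupBy derivation and bottoms out in the stated premises on $f$, and the last two are immediate (the paper derives sMap3 directly from Judgment~(\ref{cmap-box}) and does not even write out a proof for swap).

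On mix you part ways with the paper, and you are right to. The paper's proof expands $\mathrm{mix}(X\gb_{\uplus\times\uplus}Y)$ keeping only the summand in which the join key $k$ \emph{and} both lineages $\theta_x,\theta_y$ occur in both inputs, pushes every other case into an unexamined ``$\uplus\cdots$'', and asserts in the final step that those leftovers reassemble into the unmatched part of $\mathrm{mix}(X)\gb_{\uplus\times\uplus}\mathrm{mix}(Y)$. They do not: the cross terms you identify live in the left-hand ``$\uplus\cdots$'' and nowhere on the right. Concretely, take
\[\begin{array}{l}
X=\{\,(k,(\{((),a)\},\{\,\}))\,\},\\[0.5ex]
Y=\{\,(k,(\{\,\},\{((),b)\}))\,\},
\end{array}\]
both realizable as coGroups of transformed streams (the old batch contributes only a left tuple with join key $k$, the new batch only a right tuple), so the premise $X:\gb_{\uplus\times\uplus}$ is genuinely satisfiable by such inputs. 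Then
\[\begin{array}{l}
\mathrm{mix}(X\gb_{\uplus\times\uplus}Y)=\{\,((k,((),())),(\{a\},\{b\}))\,\},\\[0.5ex]
\mathrm{mix}(X)\,\gb_{\uplus\times\uplus}\,\mathrm{mix}(Y)=\{\,\}\,\gb_{\uplus\times\uplus}\,\{\,\}=\{\,\},
\end{array}\]
so Judgment~(\ref{mix-rule}) is false as an unconditional statement, and no bookkeeping of the hidden summands can rescue the paper's derivation. Your proposed repair is the right one, and it is in fact the mechanism the paper introduces only later, in its ``Restricting Joins'' subsection: replacing~(\ref{cogroup-monoid}) by~(\ref{cogroup-box}) annotates the coGroup with $\gb_{\Box\times\uplus}$, and with $\Box$ on the left component the left group of each join key is forced to be identical across batches, so $\theta_x$ ranges over the same lineages on both sides and every old-left/new-right pair coincides with a new-left/new-right pair already present in $\mathrm{mix}(Y)$; the expansion then closes. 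One refinement to your plan: the bare ``not many-to-many'' hypothesis is indeed not strong enough, as you suspected---the counterexample above is one-to-one---because what actually matters is not multiplicity but per-key invariance of the left input across batches, which is exactly what $\Box$ encodes. So the provable form of the mix case is your conditional one; the paper's unconditional claim, with its schematic proof, is not.
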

\noindent
The proof of this Lemma is given in \short{the extended version
of this paper~\cite{extended-stream}}\extended{Appendix~\ref{proofs}}.

\begin{definition}[Query Merger Monoid]\label{merger-monoid}
The query merger monoid $\monoid{q}$ of a normalized query term $q$
is defined as follows:
\[\begin{array}{rcl}
%\monoid{const} & = & \Box\\
%\monoid{(q_1,\,q_2)} & = & \monoid{q_1}\times\monoid{q_2}\\
\monoid{\mathrm{reduce}(\oplus,\,\mathrm{cMap}(f,\,e))} & = & \gb_\oplus\\
\monoid{\mathrm{cMap}(f,\,S_i)} & = & \uplus\\
\monoid{\mathrm{cMap}(f,\,e)} & = & \gb_\otimes\hspace*{5ex}\mbox{if $e\not= S_i$}
\end{array}\]
where the monoid $\gb_\otimes$ in the last equation comes from\linebreak
$\mathrm{sMap1}(f,\,X):\;\gb_\otimes$ in Equation~(\ref{smap1-rule}).
\end{definition}
\noindent
Based on Lemma~\ref{smap-hom}, we can now prove that the transformed query is a homomorphism:
\begin{theorem}[Homomorphism]\label{hom}
Any transformed term $\trq{q}$, where $q$ is defined in
Definition~\ref{normalized-algebra}, is a homomorphism over the input
streams, provided that each generated sMap1 and sMap2 term in $\trq{q}$ satisfies
the premises in Judgment~(\ref{smap1-rule}) and~(\ref{smap2-rule}):
\begin{gather}
\rho[S_1:\uplus,\ldots,S_n:\uplus]\vdash\trq{q}:\;\monoid{q}\label{cc2}
\end{gather}
\end{theorem}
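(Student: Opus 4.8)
The plan is to prove Judgment~(\ref{cc2}) by structural induction on the normalized query term $q$, following the mutual recursion of the three transformations $\mathcal{T}_q$, $\mathcal{T}_e$, and $\mathcal{T}_c$ in Algorithm~\ref{annotation-alg}. Write $\rho$ for an environment in which every stream source $S_i$ is bound to $\uplus$. Because the statement constrains only $\mathcal{T}_q$, the first move is to strengthen the induction hypothesis with two auxiliary invariants that pin down the monoids of the subsidiary transformations: (i) for every body $c$, $\rho\vdash\trc{c}:\uplus$; and (ii) for every group-by/co-group body $e$, $\rho\vdash\tre{e}:\gb_\uplus$ when $e=\mathrm{groupBy}(\cdots)$ and $\rho\vdash\tre{e}:\gb_{\uplus\times\uplus}$ when $e=\mathrm{coGroup}(\cdots)$. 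These are precisely the input monoids required by the premises of Judgments~(\ref{smap1-rule}) and~(\ref{smap2-rule}), so having them available makes the sMap1/sMap2 obligations dischargeable. Note that $\mathcal{T}_e$ is invoked (via Rules~(\ref{norm5}) and~(\ref{norm6})) only on group-by and co-group bodies; the $S_i$ bodies are consumed directly at the $\mathcal{T}_q$ and $\mathcal{T}_c$ level, so the case split below is exhaustive.

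I would first close invariants (i) and (ii) on each other, as they form the recursive core. For $\trc{\mathrm{cMap}(f,S_i)}=\mathrm{sMap3}(f,S_i)$ (Rule~(\ref{norm7})), invariant (i) follows from (\ref{smap3-rule}) because $S_i:\uplus$; for $\trc{\mathrm{cMap}(f,e)}=\mathrm{sMap2}(f,\tre{e})$ (Rule~(\ref{norm8})), the inductive invariant (ii) gives $\tre{e}:\gb_\oplus$, and then (\ref{smap2-rule})---whose functional premise is supplied by the theorem's hypothesis---yields $\uplus$. For $\mathcal{T}_e$, the group-by Rule~(\ref{norm5}) threads invariant (i) for $\trc{c}$ through (\ref{swap-rule}) and then (\ref{groupby-monoid}) to reach $\gb_\uplus$, while the co-group Rule~(\ref{norm6}) threads invariant (i) for both arguments through (\ref{cogroup-monoid}) and then (\ref{mix-rule}) to reach $\gb_{\uplus\times\uplus}$. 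These two outcomes are exactly the two halves of invariant (ii), so $\mathcal{T}_c$ and $\mathcal{T}_e$ mutually support one another at strictly smaller terms.

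With the auxiliaries in place, the top-level $\mathcal{T}_q$ cases line up one-for-one with the clauses of Definition~\ref{merger-monoid}. The two stream-source base cases attach the constant empty lineage $()$: in Rule~(\ref{norm3}) the result is a plain cMap over $S_i$, homomorphic with $\uplus$ by (\ref{cmap-box}); in Rule~(\ref{norm2a}) I would check directly that wrapping the total aggregate $r=\mathrm{reduce}(\oplus,\mathrm{cMap}(f,S_i))$ as the singleton $\{((),r)\}$ is a $\gb_\oplus$-homomorphism, since splitting a source splits $r$ as $r_1\oplus r_2$ (the reduce/cMap homomorphism law) and $\{((),r_1)\}\gb_\oplus\{((),r_2)\}=\{((),r_1\oplus r_2)\}$. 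The non-source cMap case, Rule~(\ref{norm4}), is immediate: invariant (ii) together with (\ref{smap1-rule}) gives $\mathrm{sMap1}(f,\tre{e}):\gb_\otimes$, which is exactly $\monoid{\mathrm{cMap}(f,e)}$ by Definition~\ref{merger-monoid}.

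The step I expect to be the genuine obstacle is the remaining reduce case, Rule~(\ref{norm2}), where $\trq{\mathrm{reduce}(\oplus,\mathrm{cMap}(f,e))}=\mathrm{reduce}(\gb_\oplus,\mathrm{sMap1}(f,\tre{e}))$ must be shown to have monoid $\gb_\oplus$. Applying the reduce Rule~(\ref{reduce}) to obtain $\gb_\oplus$ requires its argument $\mathrm{sMap1}(f,\tre{e})$ to be a \emph{plain-bag} ($\uplus$) homomorphism, whereas (\ref{smap1-rule}) delivers $\gb_\otimes$ in the output-cMap situation. The reconciliation is that, in a reduce context, $f$ feeds the outer aggregation and hence emits flat values, so its functional premise is a $\uplus$-homomorphism; a short direct calculation (split the $\gb_\oplus$-structured input $\tre{e}$ and push $f$ through group merges using $f(k,v_1\oplus v_2)=f(k,v_1)\uplus f(k,v_2)$) then shows $\mathrm{sMap1}(f,\tre{e}):\uplus$, a $\uplus$-typed variant of Judgment~(\ref{smap1-rule}). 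Establishing this variant---recognizing that sMap1 inherits the monoid of its functional argument's output, namely $\uplus$ under aggregation versus $\gb_\otimes$ at the query head---is where I would spend the most care, and it is what finally aligns the derived monoid with $\monoid{q}$ across all clauses of Definition~\ref{merger-monoid}.
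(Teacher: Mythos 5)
Your proposal is correct, and its skeleton is the same as the paper's: the paper's proof also proceeds by mutual induction, first establishing exactly your invariants (i) and (ii) --- $\trc{c}:\uplus$ via Judgments~(\ref{smap3-rule}) and~(\ref{smap2-rule}), and $\tre{\mathrm{groupBy}(c)}:\,\gb_\uplus$, $\tre{\mathrm{coGroup}(c_1,c_2)}:\,\gb_{\uplus\times\uplus}$ via Judgments~(\ref{groupby-monoid}), (\ref{swap-rule}), (\ref{cogroup-monoid}) and~(\ref{mix-rule}) --- and then discharging the query-head cases, with Rule~(\ref{norm4}) handled by Judgment~(\ref{smap1-rule}) exactly as you do. Where you differ is in the two places the paper treats lightly. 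First, the paper does not spell out the stream-source base cases~(\ref{norm2a}) and~(\ref{norm3}); your direct $\gb_\oplus$ calculation for the singleton $\{((),r)\}$ and the appeal to~(\ref{cmap-box}) fill that in correctly. Second, and more substantively, for the reduce case~(\ref{norm2}) the paper simply cites Judgments~(\ref{reduce}) and~(\ref{smap1-rule}), which do not compose as literally stated: the premise of~(\ref{reduce}) requires its argument to be a $\uplus$-homomorphism, while~(\ref{smap1-rule}) only delivers $\gb_\otimes$. Your resolution --- proving a $\uplus$-typed variant of~(\ref{smap1-rule}) (the exact sMap1 analogue of Judgment~(\ref{smap2-rule}), with functional premise $f(k,v):\uplus$, which is the right reading of the theorem's hypothesis for an sMap1 whose output feeds an aggregation rather than the query head) and only then applying~(\ref{reduce}) --- is what actually makes this case go through, including when $f$ emits several values per lineage key, where the direct $\gb_\otimes$ route visibly fails. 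So your argument is the paper's argument made rigorous at its weakest step, at the cost of one extra lemma whose proof mirrors the paper's own proof of~(\ref{smap2-rule}).
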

\noindent
The proof is given in \short{the extended version
of this paper~\cite{extended-stream}}\extended{Appendix~\ref{proofs}}.
\begin{definition}[Query Answer]\label{query-answer}
The query answer $\answer{q}_x$ of the query $q$,
defined in Definition~\ref{normalized-algebra},
is a function over the current state $x=\trq{q}$ and is derived as follows:
\[\begin{array}{l}
%\answer{(q_1,\,q_2)}_x = &\; (\answer{q_1}_{\pi_1(x)},\,\answer{q_2}_{\pi_2(x)})\\
\answer{\mathrm{cMap}(f,\,S_i)}_x = \pi_2(x)\\
\answer{\mathrm{reduce}(\oplus,\,\mathrm{cMap}(f,\,e))}_x = \mathrm{reduce}(\oplus,\,\pi_2(x))\\
\answer{\mathrm{cMap}(f,\,e)}_x = \pi_2(\mathrm{reduce}(\gb_\otimes,T(x)))\hspace*{5ex}\mbox{if $e\not= S_i$}\\
\answer{q}_x = \pi_2(\mathrm{elem}(\mathrm{reduce}(\gb_\oplus,x)))\hfill\mbox{otherwise}
\end{array}\]
where the monoid $\oplus$ in the last equation is $\monoid{q}$,
the monoid $\otimes$ in the second equation comes from
$\mathrm{sMap1}(f,\,X):\;\gb_\otimes$ in Equation~(\ref{smap1-rule}), and
\[\begin{array}{rcl}
\mathrm{elem}(X) & = & \left\{\begin{array}{ll}
v & \mbox{if $X=\{v\}$}\\
\mathrm{error} & \mbox{otherwise}
\end{array}\right.\\[2.5ex]
T(X) & = & \{\,(k,a)\,|\,((k,\theta),a)\in X\,\}
\end{array}\]
\end{definition}
\noindent
The following theorem proves that $\answer{q}_x$ returns the same answer as $q$:
\begin{theorem}[Correctness]\label{correctness}
The $\answer{q}_x$ over the state $x=\trq{q}$ returns the same result as the original query $q$,
where $q$ is defined in Definition~\ref{normalized-algebra}:
\begin{align}
&\answer{q}_x = q & \mbox{for $x=\trq{q}$}\label{cc1}
\end{align}
\end{theorem}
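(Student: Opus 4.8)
My plan is to establish \eqref{cc1} by structural induction on the normalized query $q$ of Definition~\ref{normalized-algebra}, pairing each of the four equations defining $\mathcal{T}_q$ with the matching clause of $\mathcal{A}$. Throughout I inherit the side conditions of Theorem~\ref{hom}, namely that every generated sMap1 and sMap2 term satisfies the premises of Judgments~\eqref{smap1-rule} and~\eqref{smap2-rule}, since these are exactly what make the merge monoid $\otimes$, and hence $\mathcal{A}$ itself, well defined.

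The two base cases $q=\mathrm{cMap}(f,S_i)$ and $q=\mathrm{reduce}(\oplus,\mathrm{cMap}(f,S_i))$ follow by unfolding \eqref{norm3} and \eqref{norm2a} and then the corresponding clauses of $\mathcal{A}$: $\trq{q}$ tags every output (or the single aggregate) with the empty lineage $()$, so $\pi_2$ merely removes the tag and the singleton law $\mathrm{reduce}(\oplus,\{r\})=r$ finishes the aggregation. No grouping has been refined, so nothing must be re-merged.

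In the inductive case $q=\mathrm{cMap}(f,e)$ (and its reduce-headed variant) with $e=\mathrm{groupBy}(c)$, the heart is a round-trip claim. By \eqref{norm5}, $\mathcal{T}_e$ groups by the \emph{entire} lineage $(k,\theta)$, a refinement of the original group key $k$; the answer then strips $\theta$ with $T$ and re-merges the sub-group images with $\mathrm{reduce}(\gb_\otimes,\cdot)$. This recovers $f$ applied to the whole group, because the premise of \eqref{smap1-rule} makes $f(k,\cdot)$ a $\gb_\otimes$-homomorphism on the group monoid $\oplus$, while Equation~\eqref{groupby-unnest} guarantees that the lineage sub-groups partition the original group exactly. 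A final $\pi_2$ drops the key to yield $q$; for a total aggregation the per-lineage partial aggregates are summed with $\oplus$ instead, and when a single value is expected $\mathrm{elem}$ extracts it from the reduced singleton state. Exactness of the partition is what makes each of these recoveries lossless.

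The hard part will be $e=\mathrm{coGroup}(c_1,c_2)$. Here \eqref{norm6} applies $\mathrm{mix}$ to $\mathrm{coGroup}(\trc{c_1},\trc{c_2})$, and $\mathrm{mix}$ regroups the left and right matches \emph{independently}, by $\theta_x$ and by $\theta_y$, emitting the cross product of lineage cells $(k,(\theta_x,\theta_y))\mapsto(xs_{\theta_x},ys_{\theta_y})$. Inverting this is the crux. I would first rebuild, for each join key, the two full match bags from their refined pieces, using the two coGroup-recovery identities stated after the coGroup definition together with \eqref{groupby-unnest} applied inside $\mathrm{mix}$. The delicate point is that the cMap body above a coGroup is a join — a cross product of the two matched groups — and so is \emph{not} a plain $\uplus$-homomorphism over the product monoid $\uplus\times\uplus$; its recovery instead rests on the fact that the cross product of the two lineage partitions is again a partition of the full join, so that re-merging the per-cell images reconstructs $f$ of the complete matched pair with neither loss nor duplication. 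Establishing this partition argument, and checking that it threads correctly through the enclosing groupBy or reduce and through the $T$, $\pi_2$, and $\mathrm{elem}$ projections, is where essentially all the work of the theorem lies; the remaining steps are routine appeals to the cMap, groupBy, and singleton-monoid laws already proved in the paper.
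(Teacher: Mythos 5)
Your proposal is correct and follows essentially the same route as the paper's own proof: structural induction on the normalized query, pairing each clause of $\mathcal{T}_q$ with the matching clause of $\mathcal{A}$, with Equation~(\ref{groupby-unnest}) supplying the partition property and the homomorphism premises of Judgments~(\ref{smap1-rule}) and~(\ref{smap2-rule}) licensing the re-merge. Two differences are worth flagging. First, organization: the paper factors your ``round-trip claim'' into explicit lineage-stripping operators $T'$, $G_1$, $G_2$ and a standalone commutation lemma (Lemma~\ref{lemma}, Equations~(\ref{c1})--(\ref{c4})), and it first proves $T'(\trc{c})=c$ for query bodies before touching the answer clauses; your inlined version is the same mathematics, but this factoring is what keeps the induction hypotheses clean, and you would likely rediscover it when writing out the details. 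Second, and more substantively, the coGroup case: the paper does \emph{not} use your bi-distributivity argument for cross-product join bodies. Instead, Lemma~\ref{lemma} simply hypothesizes that $f$ satisfies the premise of Judgment~(\ref{smap2-rule}) --- which for a join body holds only under the one-to-many annotation $\gb_{\Box\times\uplus}$ of the ``Restricting Joins'' subsection --- and proves Equation~(\ref{c4}) by the same monoid induction as Equation~(\ref{c3}). Your partition argument is sound for the cross-product bodies that remain after the factoring of Section~\ref{factoring}, and it has the modest advantage of not invoking the join restriction for mere correctness; however, when you flesh out your preliminary step of ``rebuilding the two full match bags,'' note that naively unnesting the mix cells counts each left match once per distinct right lineage (and drops keys whose opposite side is empty), so the recovery must be phrased --- as your final sentence in fact does --- on the per-cell images of $f$, never on the reconstructed coGroup inputs themselves.
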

\noindent
The proof of this theorem is given in \short{the extended version
of this paper~\cite{extended-stream}}\extended{Appendix~\ref{proofs}}.

\subsection{Restricting Joins}

Theorem~\ref{hom} indicates that a query transformed by
Algorithm~\ref{annotation-alg} is a homomorphism if the cMap
functional arguments in the query satisfy the premises in
Judgment~(\ref{smap1-rule}) and~(\ref{smap2-rule}).  But, consider the
following join:
\[\begin{array}{l}
\mathrm{cMap}(\lambda(k,(xs,ys)).\,\mathrm{cMap}(\lambda x.\,\mathrm{cMap}(\lambda y.\,\{(x,y)\},\,xs),\,ys),\\
\skiptext{$\mathrm{cMap}($}\mathrm{coGroup}(X,Y))
\end{array}\]
Our monoid inference system can not prove that the functional
argument of this cMap is a homomorphism on both $xs$ and $ys$.  This is
expected because this join could be a many-to-many join, which we know
can not be a homomorphism.  Since we have given up on handling many-to-many
joins, we want to extend the monoid inference algorithm to always
assume that every join is a one-to-one or one-to-many join and draw
inferences based on this assumption.

We have already seen in Section~\ref{monoid-inference} that
if a term is invariant under change, such as an invariant data source,
it is associated with the special monoid $\Box$.
We can also use the annotation $\Box$ to denote certain functional
dependencies, such as $\kappa\rightarrow\alpha$ on a bag $X$ of type
$\{(\kappa,\alpha)\}$, which indicates that the second component of a
pair in $X$ depends on the first. This dependency is captured by
annotating $\mathrm{groupBy}(X)$ with the monoid $\gb_\Box$, which
indicates that each group remains invariant under change, implying
that the group-by key is also a unique key of $X$. That is, if
$(k,s_1)\in X_1$ and $(k,s_2)\in X_2$, then $\{\,(k,s_1\Box
s_2)\,|\,(k,s_1)\in X_1,\,(k,s_2)\in X_2\,\}$ in
Definition~(\ref{gb-def}) must have $s_1=s_2$ so that $s_1\Box
s_2=s_1$, otherwise it will be an error. This means that we can not
have two different groups associated with the same key $k$. Given that
a groupBy over a singleton gives a singleton group, each group
returned from a groupBy is a singleton that remains invariant.
A similar functional dependency can also apply to a join between $X$
of type $\{(\kappa,\alpha)\}$ and $Y$ of type $\{(\kappa,\beta)\}$,
which is a $\mathrm{coGroup}(X,Y)$ of type
$\{(\kappa,(\{\alpha\},\{\beta\}))\}$. To indicate that this join is
one-to-one or one-to-many, we annotate $\mathrm{coGroup}(X,Y)$ with
the monoid $\gb_{\Box\times\uplus}$, which enforces the constraint that the bag
$\{\alpha\}$ in $\{(\kappa,(\{\alpha\},\{\beta\}))\}$ be either empty
or singleton, that is, at most one $X$ element can be joined with $Y$
over a key $\kappa$. This is because the $X$ and $Y$ values that are
joined over the same key are merged with $\Box$ and $\uplus$,
respectively, as indicated by $\gb_{\Box\times\uplus}$.
Therefore, to incorporate the assumption 
that all joins are one-to-one or one-to-many in the monoid inference system,
we have to replace Judgment~(\ref{cogroup-monoid}) with the following judgment:
\begin{gather}
\frac{\rho\vdash\mathrm{coGroup}(X,Y):\,\gb_{\Box\times\uplus}}{\rho\vdash X:\uplus,\;\rho\vdash Y:\uplus}\label{cogroup-box}
\end{gather}
Given this judgment, the cMap input of our join example will be annotated
with $\gb_{\Box\times\uplus}$, which means that $xs$ and $ys$ will be annotated with $\Box$ and $\uplus$, respectively
(ie, $xs$ is invariant). Based on these annotations, the functional argument
$\mathrm{cMap}(\lambda x.\,\mathrm{cMap}(\lambda y.\,\{(x,y)\},\,xs),\,ys)$ satisfies
the premise in Judgment~(\ref{smap2-rule}) since it is annotated with $\uplus$.

\subsection{Handling Iterative Queries}\label{iteration}

The approximate algorithm that processes iterations incrementally,
presented in Section~\ref{approach}, requires an additional operation,
called the diffusion operator $\diffuse{\otimes}$, that satisfies the property:
\[T\otimes(T\diffuse{\otimes}\Delta T)=T\otimes(T\otimes\Delta T)\]
Our goal is to define $\diffuse{\otimes}$ in terms of $\otimes$ in such a way that
$T\diffuse{\otimes}\Delta T$ contains only those parts of $T\otimes\Delta T$
that changed from $T$.
\begin{definition}[Diffusion] The diffusion $\diffuse{\otimes}$ of
a monoid $\otimes$ is defined as follows:
\[\begin{array}{rcllrcl}
\diffuse{\gb_\oplus} & = & \Downarrow_{\diffuse{\oplus}}&&
\diffuse{\otimes\times\oplus} & = & \diffuse{\otimes}\times\diffuse{\oplus}\\
\diffuse{\oplus} & = & \multicolumn{4}{l}{\oplus\hspace*{4ex}\mbox{for any other monoid}}
\end{array}\]
where $\Downarrow_{\otimes}$ is a right-outer join defined as follows:
\[\begin{array}{rcl}
X\,\Downarrow_\otimes\,Y & \defined & \{\,(k,a\otimes b)\,|\,(k,a)\in X,\,(k,b)\in Y\,\}\\
&&\uplus\;\{\,(k,b)\,|\,(k,b)\in Y,\,k\not\in\pi_1(X)\,\}
\end{array}\]
\end{definition}

\noindent
This $\diffuse{\otimes}$ satisfies the
desired property, $T\otimes(T\diffuse{\otimes}\Delta T)=T\otimes(T\otimes\Delta T)$
(proven in \short{the extended version
of this paper~\cite{extended-stream}}\extended{Appendix~\ref{proofs}}).

\section{Non-Homomorphic Terms}\label{factoring}

Theorem~\ref{correctness} indicates that the terms generated by the
transformations~(\ref{norm2a}) through~(\ref{norm8}) are homomorphisms
as long as the premises of the judgements in Lemma~\ref{smap-hom}
are true. These premises indicate that the cMap functional arguments
must be homomorphisms too. We want the operations that cannot be
annotated with a monoid to be transformed so that the non-homomorphic
parts of the operation are pulled outwards from the query using
rewrite rules. We achieve this with the help of kMap:
\[\mathrm{kMap}(f,X)\;\defined\;\mathrm{cMap}(\lambda (\theta,v).\,\{(\theta,f(v))\},\,X)\]
which is a cMap that propagates the lineage $\theta$ as is.
In our framework, all non-homomorphic parts take the form of a kMap
and are accumulated into one kMap using
\[\mathrm{kMap}(f,\mathrm{kMap}(g,X))\;\rightarrow\;\mathrm{kMap}(\lambda x.\,f(g(x)),\,X)\]
More specifically, we first split each non-homomorphic cMap to a
composition of a kMap and a cMap so that the latter cMap is a
homomorphism, and then we pull and merge kMaps. Consider the term
$\mathrm{cMap}(\lambda (\theta,v).\,\{(\theta',e)\},\,X)$, which
creates a new lineage $\theta'$ from the old $\theta$. In our
framework, we find the largest subterms in the algebraic term $e$,
namely $e_1,\ldots,e_n$, that are homomorphisms. This is accomplished
by traversing the tree that represents the term $e$, starting from the
root, and by checking if the node can be inferred to be a homomorphism. If
it is, the node is replaced with a new variable. Thus, $e$ is mapped
to a term $f(e_1,\ldots,e_n)$, for some term $f$, and the terms
$e_1,\ldots,e_n$ are replaced with variables when $f$ is pulled
outwards:
\[\begin{array}{l}
\mathrm{cMap}(\lambda (\theta,v).\,\{(\theta',e)\},\,X)\\
\hspace*{8ex}\rightarrow \mathrm{cMap}(\lambda (\theta,v).\,\{(\theta',f(e_1,\ldots,e_n))\},\,X)\\
\hspace*{8ex}\rightarrow \mathrm{kMap}(\lambda (x_1,\ldots,x_n).\,f(x_1,\ldots,x_n),\\
\hspace*{8ex}\skiptext{$\rightarrow \mathrm{kMap}($}\mathrm{cMap}(\lambda (\theta,v).\,\{(\theta',(e_1,\ldots,e_n))\},\,X))
\end{array}\]
The kMaps are combined and are pulled outwards from the query using the
following rewrite rules:
\[\begin{array}{l}
\mathrm{sMap2}(g,\mathrm{kMap}(f,X)) \rightarrow
  \mathrm{sMap2}(\lambda (k,v).\,g(k,f(v)),\,X)\\[1ex]
\mathrm{sMap1}(g,\mathrm{kMap}(f,X)) \rightarrow
  \mathrm{sMap1}(\lambda (k,v).\,g(k,f(v)),\,X)\\[1ex]
\mathrm{groupBy}(\mathrm{kMap}(f,X)) \rightarrow
  \mathrm{kMap}(\lambda s.\,\mathrm{map}(f,s),\,\mathrm{groupBy}(X))\\[1ex]
\mathrm{coGroup}(\mathrm{kMap}(f_x,\,X),\,\mathrm{kMap}(f_y,Y))\\
\hspace*{5ex}\rightarrow \mathrm{kMap}(\lambda(k,(xs,ys)).\,(k,(\mathrm{map}(f_x,xs),\mathrm{map}(f_y,ys))),\\
\hspace*{5ex}\skiptext{$\rightarrow \mathrm{kMap}($}\mathrm{coGroup}(X,Y))
\end{array}\]
where $\mathrm{map}(f,X)=\mathrm{cMap}(\lambda x.\,\{f(x)\},\,X)$.
Rewrite rules as these, when applied repeatedly, can pull out and
combine the non-homomorphic parts of a query, leaving a homomorphism
whose merge function can be derived from our annotation rules.

\section{An Example}\label{example}

Consider again the algebraic term $Q(S_1,S_2)$, presented at the end
of Section~\ref{mrql-algebra}. If we apply the transformations in
Equations~(\ref{norm2a}) through~(\ref{norm8}), we derive the following
term term $\trq{Q(S_1,S_2)}$, which propagates the join and group-by
keys to the query output:
\begin{align}
&\mathrm{sMap1}(\lambda(k,s).\,\{(k,avg(s))\},\nonumber\\
&\skiptext{$\mathrm{sM}($}\mathrm{groupBy}(\mathrm{swap}(\mathrm{sMap2}(\lambda(j,(xs,ys)).\,g(xs,ys),\,\nonumber\\
&\skiptext{$\mathrm{sM}(\mathrm{xx}($}\mathrm{mix}(\mathrm{coGroup}(\mathrm{sMap3}(\lambda x.\,\{(x.B,x)\},\,S_1),\label{tterm}\\
&\skiptext{$\mathrm{sM}(\mathrm{xx}(\mathrm{mix}(\mathrm{coGroup}($}\mathrm{sMap3}(\lambda y.\,\{(y.C,y)\},\,S_2)))))))\nonumber
\end{align}
where $avg(s)=\mathrm{reduce}(+,s)/\mathrm{reduce}(+,\mathrm{cMap}(\lambda v.\,\{1\},s))$\linebreak
and $g(xs,ys)=\mathrm{cMap}(\lambda x.\,\mathrm{cMap}(\lambda y.\,\{(x.A,y.D)\},\,ys),\,xs)$.
If we expand sMap1, sMap2, swap, and mix, then the transformed query is:
\[\begin{array}{l}
\{\,(\theta,(k,avg(s)))\\
\,|\,(\theta,(k,s))\in\mathrm{groupBy}(\{\,((k,(j,((),()))),v)\\
\skiptext{$\,|\,(\theta,(k,s))\in\mathrm{groupBy}($}\,|\,(j,(s_1,s_2))\in join,\\
\skiptext{$\,|\,(\theta,(k,s))\in\mathrm{groupBy}(\,|\,$}((),xs)\in\mathrm{groupBy}(s_1),\\
\skiptext{$\,|\,(\theta,(k,s))\in\mathrm{groupBy}(\,|\,$}((),ys)\in\mathrm{groupBy}(s_2),\\
\skiptext{$\,|\,(\theta,(k,s))\in\mathrm{groupBy}(\,|\,$}(k,v)\in g(xs,ys)\,\})\,\}
\end{array}\]
where $join$ is
\[\mathrm{coGroup}(\{\,(x.B,((),x))\,|\,x\in S_1\,\},\,\{\,(y.C,((),y))\,|\,y\in S_2\,\})\]
The output lineage $\theta$ is $(k,(j,((),())))$, where $k$ and $j$ are groupBy and coGroup keys.
But, $\mathrm{groupBy}(s_1)=\{((),\pi_2(s_1))\}$ and $\mathrm{groupBy}(s_2)=\{((),\pi_2(s_2))\}$ since
the key is $()$. Consequently, the transformed query becomes:
\[\begin{array}{l}
\{\,(\theta,(k,avg(s)))\\
\,|\,(\theta,(k,s))\in\mathrm{groupBy}(\{\,((k,(j,((),()))),v)\\
\skiptext{$\,|\,(\theta,(k,s))\in\mathrm{groupBy}($}\,|\,(j,(s_1,s_2))\in join,\\
\skiptext{$\,|\,(\theta,(k,s))\in\mathrm{groupBy}(\,|\,$}(k,v)\in g(\pi_2(s_1),\pi_2(s_2))\,\})\,\}
\end{array}\]

We now check if the transformed query~(\ref{tterm}) is a homomorphism.
Both coGroup inputs in~(\ref{tterm}) are sMap2 terms, annotated with $\uplus$,
which means that, based on Equation~(\ref{cogroup-monoid}), the coGroup is annotated with $\gb_{\Box\times\uplus}$.
From Equation~(\ref{smap2-rule}), the sMap2 operation is annotated with $\uplus$ as long as
$g(xs,ys)$ is annotated with $\uplus$.
Hence, the groupBy operation is annotated with $\gb_\uplus$, based on Equation~(\ref{groupby-monoid}).
Unfortunately, based on Equation~(\ref{smap1}), the sMap1 term is not a homomorphism as is,
because $avg(s)$ is not a homomorphism over $s$.
Based on the methods described in Section~\ref{factoring} that factor out non-homomorhic parts from terms,
the sMap1 term is broken into two terms $a(h(S_1,S_2))$, where $h(S_1,S_2)$ is:
\[\begin{array}{l}
\mathrm{sMap1}(\lambda(k,s).\,\{(k,(\mathrm{reduce}(+,s),\\
\skiptext{$\mathrm{sMap1}(\lambda(k,s).\,\{(k,($}\mathrm{reduce}(+,\mathrm{cMap}(\lambda v.\,1,s))))\},\ldots)
\end{array}\]
This is equivalent to the homomorphism $h(S_1,S_2)$ given in Section~\ref{approach}.

In addition, from Definition~\ref{query-answer}, the answer function $a(x)$
is $\pi_2(\mathrm{reduce}(\gb_\otimes,T(x)))$. Therefore,
the answer function $a(x)$, combined with the non-homomorphic part of the query is:
\[\begin{array}{l}
\mathrm{cMap}(\lambda(k,(s,c)).\,\{(k,s/c)\},\\
\skiptext{$\mathrm{cMap}($}\mathrm{reduce}(\gb_{\Box\times ((+)\times (+))},\,T(x)))
\end{array}\]
This is equivalent to the answer query given in Section~\ref{approach}.
Finally, $h(S_1,S_2)$ is a homomorphism annotated with $\gb_{\Box\times ((+)\times (+))}$,
because the reduce terms are annotated with $(+)$ (from Equation~(\ref{reduce}))
and we have a pair of homomorphisms. % (Equation~(\ref{pair})).
This is equivalent to the merge function $X\otimes Y$ given in Section~\ref{approach},
implemented as a partitioned join combined with aggregation.

%\extended{
\section{Handling Deletions}\label{deletion}

Our framework can be easily extended to handle deletion of existing data
from an input stream, in addition to insertion of new data. To handle
both insertions and deletions, a data stream $S_i$ in our framework
consists of an initial data set, followed by a continuous stream of
incremental batches $\Delta S_i$ and a continuous stream of
decremental batches $\Delta S_i'$ that arrive at regular time
intervals $\Delta t$. Then, the stream data at time $t+\Delta t$ is
$S_i\uplus\Delta S_i-\Delta S_i'$, where $X-Y$ is bag difference,
which satisfies $(X\uplus Y)-Y=X$. (An element appears in $X-Y$ as
many times as it appears in $X$, minus the number of times it appears
in $Y$.)  This dual stream of updates, can be implemented as a single
stream that contains updates tagged with + or -, to indicate insertion
or deletion. Alternatively, a stream data source may monitor two
separate directories, one for insertions and another for deletions, so
that if a new file is created, it will be treated as a new batch of
insertions or deletions, depending on the directory. In our
framework, we require that $\Delta S_i'\subseteq S_i$, that is, we can
only delete values that have already appeared in the stream in larger or equal
multiplicities. This restriction implies that there is a bag $W_i$ such that
$S_i=W_i\uplus\Delta S_i'$. Without this restriction, it would be hard
to diminish the query results on $S_i$ by the results on $\Delta S_i$
to calculate the new results.

\begin{definition}[Diminisher] The diminisher $\diff{\otimes}$ of
a monoid $\otimes$ is defined as follows:
\[\begin{array}{rcllrcllrcl}
\diff{\uplus} & = & -& \hspace*{1ex} &
\diff{+} & = & -& \hspace*{1ex} &
\diff{\Box} & = & \Box\\
\diff{\otimes\times\oplus} & = & \diff{\otimes}\times\diff{\oplus}&&
\diff{\gb_\oplus} & = & \Downarrow_{\diff{\oplus}}
\end{array}\]
where $\Downarrow_{\otimes}$ is a left-outer join defined as follows:
\[\begin{array}{rcl}
X\,\Downarrow_\otimes\,Y & \defined & \{\,(k,a\otimes b)\,|\,(k,a)\in X,\,(k,b)\in Y,\,a\not=b\,\}\\
&&\uplus\;\{\,(k,a)\,|\,(k,a)\in X,\,k\not\in\pi_1(Y)\,\}
\end{array}\]
\end{definition}
\noindent
Note that a diminisher is not a monoid.
\begin{theorem} For all $X,Y:$ $(X\otimes Y)\,\diff{\otimes}\,Y=X$.
\end{theorem}
\noindent
The proof is given in \short{the extended version
of this paper~\cite{extended-stream}}\extended{Appendix~\ref{proofs}}.
For $X=\otimes_z$, it implies that $Y\,\diff{\otimes}\,Y=\otimes_z$.
\begin{theorem}
If $[S_1\!:\uplus,\ldots,S_n\!:\uplus]\vdash f(\overline{S}):\otimes$\hfill
and\hfill for\hfill all\hfill $i:$\\[1ex]
$\Delta S_i'\subseteq S_i$,
then $f(\overline{S-\Delta S'})=f(\overline{S})\,\diff{\otimes}\,f(\overline{\Delta S'})$.
\end{theorem}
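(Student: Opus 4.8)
The plan is to reduce the statement to the preceding theorem, $(X\otimes Y)\,\diff{\otimes}\,Y=X$, by combining the homomorphism hypothesis on $f$ with the structural decomposition of each stream that the deletion restriction guarantees. The judgment $[S_1\!:\uplus,\ldots,S_n\!:\uplus]\vdash f(\overline{S}):\otimes$ is, by the reading of monoid inference in Equation~(\ref{monoid-eq}), precisely the assertion that $f$ is an $n$-ary homomorphism, i.e.\ $f(\overline{S\uplus S'})=f(\overline{S})\otimes f(\overline{S'})$ for all inputs. This is the only property of $f$ I will invoke.

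First I would introduce, for each $i$, the residual bag $W_i\defined S_i-\Delta S_i'$. The restriction $\Delta S_i'\subseteq S_i$, together with the bag-difference law $(X\uplus Y)-Y=X$, guarantees that $W_i$ is a genuine bag and that $S_i=W_i\uplus\Delta S_i'$; this is exactly the decomposition $S_i=W_i\uplus\Delta S_i'$ recorded when the deletion model was introduced. Hence $\overline{S-\Delta S'}=\overline{W}$, so the left-hand side of the claim is simply $f(\overline{W})$.

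Next I would rewrite the current result $f(\overline{S})$ using this decomposition and the homomorphism identity. Since $S_i=W_i\uplus\Delta S_i'$ for every $i$, applying the identity with the argument tuples $\overline{W}$ and $\overline{\Delta S'}$ gives $f(\overline{S})=f(\overline{W\uplus\Delta S'})=f(\overline{W})\otimes f(\overline{\Delta S'})$. Substituting this into the right-hand side of the claim yields $f(\overline{S})\,\diff{\otimes}\,f(\overline{\Delta S'})=\bigl(f(\overline{W})\otimes f(\overline{\Delta S'})\bigr)\,\diff{\otimes}\,f(\overline{\Delta S'})$, and an appeal to the preceding theorem with $X=f(\overline{W})$ and $Y=f(\overline{\Delta S'})$ collapses this to $f(\overline{W})$, matching the left-hand side.

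The only genuine content therefore sits in the preceding theorem, and the main obstacle — which that theorem already discharges — is showing that $\diff{\otimes}$ correctly inverts $\otimes$ componentwise across the monoid structure: in particular the $\Downarrow$ left-outer-join case for $\gb_\oplus$, where keys present only in $Y$ must be discarded and keys whose aggregated value is unchanged ($a=b$) are dropped. No such case analysis recurs here: once the homomorphism rewrite is in place, the conclusion is immediate. The one point I would verify carefully is that the hypothesis is applied at the full $n$-ary level — all streams split simultaneously, so that the single merge function $\otimes$ applies in one step and no induction on $n$ is required.
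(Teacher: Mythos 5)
Your proposal is correct and follows essentially the same route as the paper's own proof: decompose each stream as $S_i=W_i\uplus\Delta S_i'$ using the deletion restriction, apply the homomorphism identity to get $f(\overline{S})=f(\overline{W})\otimes f(\overline{\Delta S'})$, and collapse the right-hand side via the preceding theorem $(X\otimes Y)\,\diff{\otimes}\,Y=X$ with $X=f(\overline{W})$, $Y=f(\overline{\Delta S'})$. The only difference is cosmetic --- the paper chains equalities starting from the right-hand side while you work from both sides toward $f(\overline{W})$ --- so there is nothing further to add.
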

\begin{proof}
Since $\Delta S_i'\subseteq S_i$, then there must be $W_i$ such that $S_i=W_i\uplus\Delta S_i'$.
Thus, $S_i-\Delta S_i'=W_i\uplus\Delta S_i'-\Delta S_i'\;\Rightarrow\;W_i=S_i-\Delta S_i'$. Then,
\begin{align*}
&f(\overline{S})\,\diff{\otimes}\,f(\overline{\Delta S'})
=f(\overline{W\uplus\Delta S'})\,\diff{\otimes}\,f(\overline{\Delta S'})\\
&=(f(\overline{W})\otimes f(\overline{\Delta S'}))\,\diff{\otimes}\,f(\overline{\Delta S'})
=f(\overline{W})=f(\overline{S-\Delta S'})&&\qedhere
\end{align*}
\end{proof}
\noindent
This theorem indicates that, to process the deletions $\overline{\Delta S'}$,
we can simply use the same methods as for insertions, but using the diminishing
function $\diff{\otimes}$ for merging the state, instead of $\otimes$.
That is, we can use the same functions $h$ and $a$, but now
the state is diminished as follows:
\[\begin{array}{l}
\mathrm{state}\;\leftarrow\;\mathrm{state}\;\diff{\otimes}\; h(\overline{\Delta S'})\\
\mathbf{return}\; a(\mathrm{state})
\end{array}\]
For example, the join-groupby query
used in Section~\ref{approach} must now use the merging,
$X\Downarrow_{(-)\times (-)} Y$, equal to:
\begin{lstlisting}
select (#$\theta$#, (sx-sy,cx-cy))
  from (#$\theta$#,(sx,cx)) in #$X$#,
       (#$\theta$#,(sy,cy)) in #$Y$#
union select x from x in #$X$# where #$\pi_1$#(x) not in #$\pi_1(Y)$#
\end{lstlisting}
%}

\begin{figure*}
\scalebox{0.80}{\includegraphics{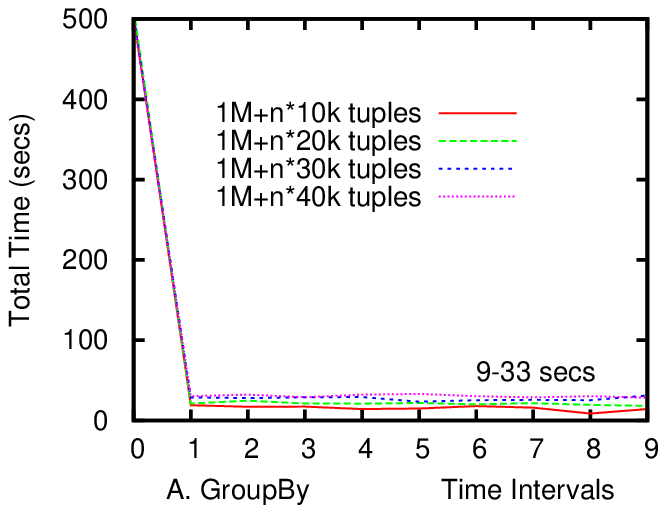}}
\scalebox{0.80}{\includegraphics{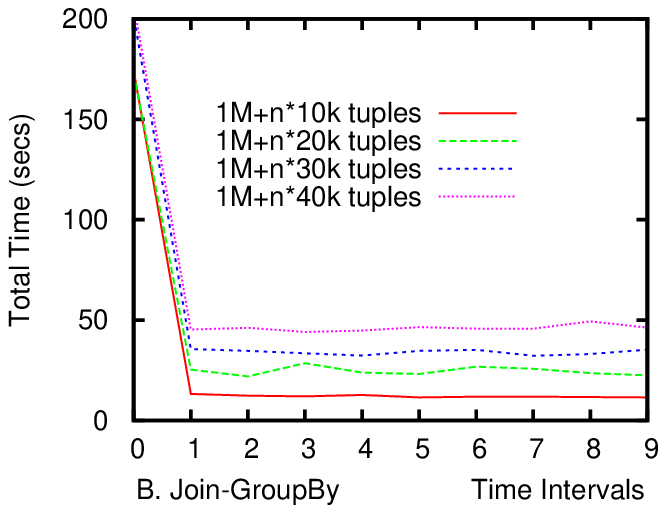}}
\scalebox{0.80}{\includegraphics{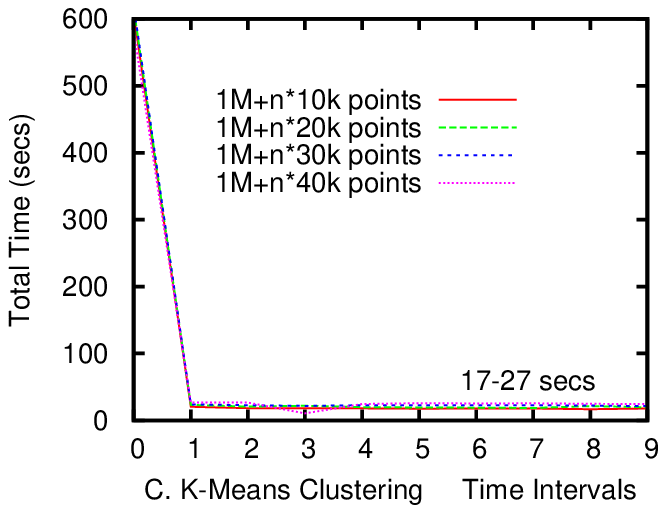}}
\caption{Incremental Query Evaluation of GroupBy, Join-GroupBy, and K-Means Clustering}\label{results}
\end{figure*}

\section{Implementation}\label{implementation}

We have implemented our incremental processing framework using Apache
MRQL~\cite{mrql} on top of Apache Spark Streaming~\cite{spark:sosp13}.
The Spark streaming engine monitors the file directories used as
stream sources in an MRQL query, and when a new file is inserted in
one of these directories or the modification time of a file changes,
it triggers the MRQL query processor to process the new files, based on
the state derived from the previous step, and creates a new state.

We have introduced a new physical operator, called Incr, which is a
stateful operator that updates a state. More
specifically, every instance of this operation is annotated with a
state number $i$ and is associated with a state, $\mathrm{state}_i$,
of type $\alpha_i$. The operation $\mathrm{Incr}(i,s_0,F)$, where $F$
is a state transition function of type $\alpha_i\rightarrow\alpha_i$
and $s_0$ is the initial state of type $\alpha_i$, has the following
semantics:
\[\begin{array}{l}
\mathrm{state}_i\;\leftarrow\;F(\mathrm{state}_i)\\
\mathbf{return}\;\mathrm{state}_i
\end{array}\]
with $\mathrm{state}_i=s_0$, initially. Note that the states
are preserved across the Inc calls and are modified by these calls.

Recall that, in our framework, we break a query $q$ into a
homomorphism $h$ and an answer function $a$, and we evaluate the
query incrementally using:
\[\begin{array}{l}
\mathrm{state}_1\;\leftarrow\;\mathrm{state}_1\,\otimes\, h(\overline{\Delta S})\\
\mathbf{return}\; a(\mathrm{state}_1)
\end{array}\]
at every time interval, where, initially, $\mathrm{state}_1=\otimes_z$.
This is implemented using the following physical plan:
\[a(\mathrm{Incr}(1,\otimes_z,\lambda T.\;T\otimes h(\overline{\Delta S})))\]
For an iteration 
$\mathrm{repeat}(\lambda X.\,f(X,\overline{\Delta S}),n,g(\overline{S}))$,
we use the approximate solution, described in Section~\ref{approach}:
we split $f$ into a homomorphism $h$, for some monoid $\otimes$, and
a function $a$, and we derive a diffusion operator $\diffuse{\otimes}$:
\[\begin{array}{l}
a(\mathrm{Incr}(1,\otimes_z,\\
\skiptext{$a(\mathrm{Incr}($}\lambda T.\;T\,\otimes\,\mathrm{repeat}(\lambda\Delta T.\;T\,\diffuse{\otimes}\, h(a(\Delta T),\overline{\Delta S}),\\
\skiptext{$a(\mathrm{Incr}(\lambda T.\;T\,\otimes\,\mathrm{repeat}($}n-1,\\
\skiptext{$a(\mathrm{Incr}(\lambda T.\;T\,\otimes\,\mathrm{repeat}($}
T\,\diffuse{\otimes}\, h(g(\overline{\Delta S}),\overline{\Delta S}))))
\end{array}\]

\ignore{
and we split $g$ into a homomorphism $h'$ for some monoid $\oplus$ and
a function $a'$.
Then, the approximate solution is implemented using the following physical plan:
\[\begin{array}{l}
a(\mathrm{Incr}(1,\otimes_z,\\
\skiptext{$a(\mathrm{Incr}($}\lambda s.\,\mathrm{repeat}(\lambda s'.\,s\otimes h(a(s'),\overline{\Delta S}),\\
\skiptext{$a(\mathrm{Incr}(\lambda s.\,\mathrm{repeat}($}n-1,\\
\skiptext{$a(\mathrm{Incr}(\lambda s.\,\mathrm{repeat}($}
s\otimes h(a'(\mathrm{Incr}(2,\oplus_z,\\
\skiptext{$a(\mathrm{Incr}(\lambda s.\,\mathrm{repeat}(s\otimes h(a'(\mathrm{Incr}($}
\lambda s''\!.\,s''\!\oplus h'(\overline{\Delta S}))),\,
%\skiptext{$a(\mathrm{Incr}(\lambda s.\,\mathrm{repeat}(s\otimes h($}
\overline{\Delta S}))))
\end{array}\]
That is, the first iteration step is unrolled into the initial
state of repeat to handle the case where $X=a'(\mathrm{state}')$ in the first
call to $h$.
}

\section{Performance Evaluation}\label{performance}

The system described in this paper is available as part of the latest
official MRQL release  (MRQL 0.9.6). We have experimentally validated
the effectiveness of our methods using four queries: groupBy,
join-groupBy, k-means clustering, and PageRank. 
The platform used for our evaluations is a small cluster of 9 Linux servers,
connected through a Gigabit Ethernet switch. Each server has 4 Xeon
cores at 3.2GHz with 4GB memory. For our experiments, we used Hadoop
2.2.0 (Yarn) and Spark 1.6.0. The cluster frontend was used
exclusively as a NameNode/ResourceManager, while the rest 8 compute nodes
were used as DataNodes/NodeManagers. For our experiments, we used all
the available 32 cores of the compute nodes for Spark tasks.
\ignore{
\short{The scripts to reproduce the experiments are available at~\cite{extended-stream}.}
\extended{The scripts to reproduce the experiments are available at
\url{http://lambda.uta.edu/stream-results.zip}.}
}

The data streams used by the first 3 queries (groupBy,
join-groupBy, and k-means clustering) consist of a large set of
initial data, which is used to initialize the state, followed by a
sequence of 9 equal-size batches of data (the increments). The
experiments were repeated for increments of size 10K, 20K, 30K, and
40K tuples, always starting with a fresh state (constructed from the
initial data only). The performance results are shown in
Figure~\ref{results}. The $x$-axis represents the time points $\Delta
t$ when we get new batches of data in the stream. At time $0\Delta t$,
we have the processing of the initial data and the construction of the
initial state. Then, the 9 increments arrive at the time points
$1\Delta t$ through $9\Delta t$. The $y$-axis is the query execution
time, and there are 4 plots, one for each increment size.

The join-groupBy and the k-means clustering queries are given in
Section~\ref{approach}. The groupBy query is `\s{select (x,avg(y)) from (x,y) in
  $S$ group by x}'. The datasets used for both the groupBy and
join-groupBy queries consist of pairs of random integers between 0 and
10000. The groupBy initial dataset has size 1M tuples, while the two
join-groupBy inputs have size 100K tuples. The datasets used for the
k-means query consist of random $(X,Y)$ points in 4 squares that have
$X$ in $[2,4]$ or $[6,8]$ and $Y$ in $[2,4]$ or $[6,8]$. Thus, the 4
centroids are expected to be $(3,3)$, $(3,7)$, $(7,3)$, and $(7,7)$.
This also means that the state contains 4 centroids only. The initial
dataset for k-means contains 1M points and the k-means query uses 10
iteration steps. The k-means incremental program uses the approximate
solution described in Section~\ref{approach}\extended{ based on approximate floating
point equality}.

From Figure~\ref{results}, we can see that processing incremental
batches of data can give an order of magnitude speed-up compared to
processing all the data each time. Furthermore, the time to process
each increment does not substantially increase through time, despite
that the state grows with new data each time (in the case of the
groupBy and join-groupBy queries). This happens because merging states
is done with a partitioned join (implemented as a coGroup in Spark) so
that the new state created by coGroup is already partitioned by the
join key and is ready to be used for the next coGroup to handle the
next increment. Consequently, only the results of processing the new
data, which are typically smaller than the state, are shuffled across
the worker nodes before coGroup. This makes the incremental
processing time largely independent of the state size in most cases since
data shuffling is the most prevalent factor in main-memory distributed
processing systems.

\begin{figure}
\begin{center}
\scalebox{0.80}{\includegraphics{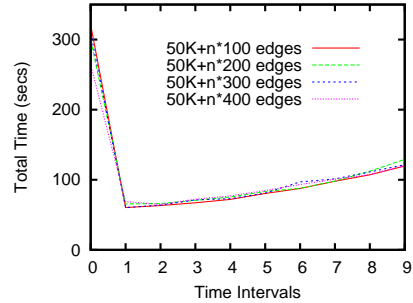}}
\caption{Incremental Query Evaluation of PageRank}\label{pagerank-results}
\end{center}
\end{figure}

We have also evaluated our system on a PageRank query on random graphs
generated by the R-MAT algorithm~\cite{Chakrabarti:sdm04} using the
Kronecker graph generator parameters: a=0.30, b=0.25, c=0.25, and
d=0.20. The PageRank query expressed in MRQL is a simple self-join on
the graph, which is optimized into a single group-by operation. (The
MRQL PageRank query is given in~\cite{edbt12}.)  The PageRanks were
calculated in 10 iteration steps.  This time, the initial graph used
in our evaluations had size 5K nodes with 50K edges and the four
different increments had sizes 100 nodes with 100, 200, 300, and 400
edges, respectively.  The performance results are shown in
Figure~\ref{pagerank-results}.  We believe that the reason that we
did not get a flat line for our incremental PageRank implementation
was due to the lack of sufficient memory tuning.  In Spark, cached RDDs
and D-Streams are not automatically garbage-collected; instead, Spark
leaves it to the programmer to cache the queued operations in memory
if their results are used multiple times, and to dispose the
cache later. This is reminiscent to legacy programming languages, such
as C, that did not support GC.  Removing all memory leaks is very crucial
to continuous stream processing, and, if not done properly, it will
eventually cause the system to run out of memory. We are planning
to fine-tune our streaming engine to remove all these memory leaks.

\section{Conclusion and Future Work}

We have presented a general framework for translating batch MRQL
queries to incremental DSPE programs. In contrast to other systems,
our methods are completely automated and are formally proven to be
correct. In addition to incremental query processing on a DSPE
platform, our framework can also be used on a batch distributed system
to process data larger than the available distributed memory, by
processing these data incrementally, in batches that can fit in
memory.  Although our methods are described using the unconventional
MRQL algebra, instead of the nested relational algebra, we believe
that many other similar query systems can use our framework by simply
translating their algebraic operators to the MRQL operators, then,
using our framework as is, and, finally, translating the resulting
operations back to their own algebra.

As a future work, we are planning to use an in-memory key-value store to
update the state in place, instead of replacing the entire state with
a new one every time we process a new batch of data. Spark
Streaming, in fact, already supports an operation updateStateByKey
operation, which allows to maintain an arbitrary state while
continuously updating it with new data. We believe that this will
considerably improve the performance of queries that require large
states, such as PageRank. Finally, in the near future, we are planning to add support
for more DSPE platforms in Incremental MRQL, such as Storm and Flink Streaming.\\[1ex]
{\bf Acknowledgments:}
This work is supported in part by the National Science Foundation 
under the grant CCF-1117369.
Some of our performance evaluations were performed at the Chameleon cloud
computing infrastructure, supported by NSF, \url{https://www.chameleoncloud.org/}.

\bibliographystyle{abbrv}

\extended{
\begin{appendix}
\section{Proofs}\label{proofs}

\begin{proof}[Proof of Equation~(\ref{groupby-unnest})]
We will use structural induction. The inductive step
\[\{\,(k,v)\,|\,(k,s)\in\mathrm{groupBy}(X\uplus Y),\,v\in s\,\}\; =\; X\uplus Y\]
can be proven as follows.
Let $GX=\mathrm{groupBy}(X)$ and $GY=\mathrm{groupBy}(Y)$. Then, for
\begin{align}
M(X,Y)\; =\; &\{\,(k,a)\,|\,(k,a)\in X,\,(k,b)\in Y\,\}\label{m-eq}\\
&\uplus\{\,(k,a)\,|\,(k,a)\in X,\,k\not\in \pi_1(Y)\,\}\nonumber
\end{align}
we have, $M(GX,GY)=GX$, because
for each group-by key $k$, there is at most one $(k,a)\in GX$ and at most one $(k,b)\in GY$.
Then, we have:
\begin{align*}
&\{\,(k,v)\,|\,(k,s)\in\mathrm{groupBy}(X\uplus Y),\,v\in s\,\}\\
&=\{\,(k,v)\,|\,(k,s)\in(GX\gb_\uplus GY),\,v\in s\,\}\\
&=\{\,(k,v)\,|\,(k,s)\in(\{\,(k,a\uplus b)\,|\,(k,a)\in GX,\,(k,b)\in GY\,\}\\
&\skiptext{$=\{\,(k,v)\,|\,$}\uplus\;\{\,(k,a)\,|\,(k,a)\in GX,\,k\not\in\pi_1(GY)\,\}\\
&\skiptext{$=\{\,(k,v)\,|\,$}\uplus\;\{\,(k,b)\,|\,(k,b)\in GY,\,k\not\in\pi_1(GX)\,\}),\,v\in s\,\}\\
&=\{\,(k,v)\,|\,(k,a)\in GX,\,(k,b)\in GY,\,v\in(a\uplus b)\,\}\\
&\skiptext{$=$}\uplus\,\{\,(k,v)\,|\,(k,a)\in GX,\,k\not\in\pi_1(GY),v\in a\,\}\\
&\skiptext{$=$}\uplus\,\{\,(k,v)\,|\,(k,b)\in GY,\,k\not\in\pi_1(GX),v\in b\,\}\\
&=\{\,(k,v)\,|\,(k,a)\in GX,\,(k,b)\in GY,\,v\in a\,\}\\
&\skiptext{$=$}\uplus\,\{\,(k,v)\,|\,(k,a)\in GX,\,(k,b)\in GY,\,v\in b\,\}\\
&\skiptext{$=$}\uplus\,\{\,(k,v)\,|\,(k,a)\in GX,\,k\not\in\pi_1(GY),v\in a\,\}\\
&\skiptext{$=$}\uplus\,\{\,(k,v)\,|\,(k,b)\in GY,\,k\not\in\pi_1(GX),v\in b\,\}\\
&=\{\,(k,v)\,|\,(k,a)\in M(GX,GY),\,v\in a\,\}\\
&\skiptext{$=$}\uplus\,\{\,(k,v)\,|\,(k,b)\in M(GY,GX),\,v\in b\,\}\\
&=\{\,(k,v)\,|\,(k,a)\in GX,\,v\in a\,\}\\
&\skiptext{$=$}\uplus\,\{\,(k,v)\,|\,(k,b)\in GY,\,v\in b\,\}\\
&= X\uplus Y\hspace*{5ex}\mbox{\em (from induction hypothesis)}&&\qedhere
\end{align*}
\end{proof}

\begin{proof}[Proof of Lemma~\ref{smap-hom} (Transformed Term Judgments)]
Using\linebreak Equations~(\ref{smap1}) and~(\ref{gb-def}),
Judgement~(\ref{smap1-rule}) can be proven as follows:
\begin{align*}
&\mathrm{sMap1}(f,X\gb_\oplus Y)\\
&=\{\,((\theta,k),b)\,|\,((\theta,k),a)\in(X\gb_\oplus Y),\,b\in f(k,a)\,\}\\
&=\{\,((\theta,k),b)\,|\,((\theta,k),x)\in X,\,((\theta,k),y)\in Y,\,b\in f(k,x\oplus y)\,\}\,\uplus\cdots\\
&=\{\,((\theta,k),b)\,|\,((\theta,k),x)\in X,\,((\theta,k),y)\in Y,\\
&\skiptext{$=\{\,((\theta,k),b)\,|\,$}b\in(f(k,x)\gb_\otimes f(k,y))\,\}\,\uplus\cdots\\
&=\{\,((\theta,k),n\otimes m)\,|\,((\theta,k),x)\in X,\,((\theta,k),y)\in Y,\\
&\skiptext{$=\{\,((\theta,k),n\otimes m)\,|\;$}n\in f(k,x),\,m\in f(k,y)\,\}\,\uplus\cdots\\
&=\{\,((\theta,k),n\otimes m)\,|\,((\theta,k),x)\in X,\,n\in f(k,x),\\
&\skiptext{$=\{\,((\theta,k),n\otimes m)\,|\;$}((\theta,k),y)\in Y,\,m\in f(k,y)\,\}\,\uplus\cdots\\
&=\{\,((\theta,k),n\otimes m)\,|\,((\theta,k),n)\in\mathrm{sMap1}(f,X),\\
&\skiptext{$=\{\,((\theta,k),n\otimes m)\,|\;$}((\theta,k),m)\in\mathrm{sMap1}(f,Y)\,\}\,\uplus\cdots\\
&=\mathrm{sMap1}(f,X)\gb_\otimes\mathrm{sMap1}(f,Y)
\end{align*}
Using Equations~(\ref{smap2}),~(\ref{gb-def}), and~(\ref{swap}),
Judgement~(\ref{smap2-rule}) can be proven as follows:
\[\begin{array}{l}
\mathrm{sMap2}(f,X\gb_\oplus Y)\\
=\mathrm{sMap2}(f,\{\,((k,\theta),a\oplus b)\,|\,((k,\theta),a)\in X,\,((k,\theta),b)\in Y\,\}\\
\skiptext{$==$}\uplus\cdots)\\
=\{\,(k',((k,\theta),d))\\
\skiptext{$==$}|\,((k,\theta),c)\in (\{\,((k,\theta),a\oplus b)\,|\,((k,\theta),a)\in X,\\
\skiptext{$=\;|\,((k,\theta),c)\in (\{\,((k,\theta),a\oplus b)\,|\,$}((k,\theta),b)\in Y\,\}\,\uplus\cdots),\\
\skiptext{$==|\,$}(k',d)\in f(k,c)\,\}\\
=\{\,(k',((k,\theta),d))\,|\,((k,\theta),a)\in X,\,((k,\theta),b)\in Y,\\
\skiptext{$=\{\,(k',((k,\theta),d))\,|\,$}(k',d)\in f(k,a\oplus b)\,\}\,\uplus\cdots\\
=\{\,(k',((k,\theta),d))\,|\,((k,\theta),a)\in X,\,((k,\theta),b)\in Y,\\
\skiptext{$=\{\,(k',((k,\theta),d))\,|\,$}(k',d)\in (f(k,a)\uplus f(k,b))\,\}\,\uplus\cdots\\
=\{\,(k',((k,\theta),d))\,|\,((k,\theta),a)\in X,\,((k,\theta),b)\in Y,\\
\skiptext{$=\{\,(k',((k,\theta),d))\,|\,$}(k',d)\in f(k,a)\,\}\\
\skiptext{$=$}\uplus\,\{\,(k',((k,\theta),d))\,|\,((k,\theta),a)\in X,\,((k,\theta),b)\in Y,\\
\skiptext{$=\uplus\,\{\,(k',((k,\theta),d))\,|\,$}(k',d)\in f(k,b)\,\}\,\uplus\cdots\\
=(\{\,(k',((k,\theta),d))\,|\,((k,\theta),a)\in X,\,((k,\theta),b)\in Y,\\
\skiptext{$=(\{\,(k',((k,\theta),d))\,|\,$}(k',d)\in f(k,a)\,\}\\
\skiptext{$=($}\uplus\,\{\,(k',((k,\theta),d))\,|\,((k,\theta),a)\in X,\,(k,\theta)\not\in\pi_1(Y),\\
\skiptext{$=(\uplus\,\{\,(k',((k,\theta),d))\,|\,$}(k',d)\in f(k,a)\,\})\\
\skiptext{$=$}\uplus\,(\{\,(k',((k,\theta),d))\,|\,((k,\theta),a)\in X,\,((k,\theta),b)\in Y,\\
\skiptext{$=\uplus\,(\{\,(k',((k,\theta),d))\,|\,$}(k',d)\in f(k,b)\,\}\\
\skiptext{$=\uplus\,($}\uplus\,\{\,(k',((k,\theta),d))\,|\,((k,\theta),b)\in Y,\,(k,\theta)\not\in\pi_1(X),\\
\skiptext{$=\uplus\,(\uplus\,\{\,(k',((k,\theta),d))\,|\,$}(k',d)\in f(k,b)\,\})\\
=\{\,(k',((k,\theta),d))\,|\,((k,\theta),a)\in X,\,(k',d)\in f(k,a)\,\})\\
\skiptext{$=$}\uplus\,\{\,(k',((k,\theta),d))\,|\,((k,\theta),b)\in Y,\,(k',d)\in f(k,b)\,\}\\
=\mathrm{sMap2}(f,X)\uplus\mathrm{sMap2}(f,Y)
\end{array}\]
Judgement~(\ref{smap3-rule}) is a direct consequence of Judgement~(\ref{cmap-box}).
Judgement~(\ref{mix-rule}) can be proven as follows:
\begin{align*}
&\mathrm{mix}(X\gb_{\uplus\times\uplus}Y)\\
&=\{\,((k,(\theta_x,\theta_y)),(xs,ys))\,|\,(k,(s_1,s_2))\in (X\gb_{\uplus\times\uplus}Y),\\
&\hspace*{8ex}(\theta_x,xs)\in\mathrm{groupBy}(s_1),\,(\theta_y,ys)\in\mathrm{groupBy}(s_2)\,\}\\
&=\{\,((k,(\theta_x,\theta_y)),(xs,ys))\,|\,(k,(s^1_1,s^1_2))\in X,\,(k,(s^2_1,s^2_2))\in Y,\\
&\hspace*{8ex}(\theta_x,xs)\in\mathrm{groupBy}(s^1_1\uplus s^2_1),\\
&\hspace*{8ex}(\theta_y,ys)\in\mathrm{groupBy}(s^1_2\uplus s^2_2)\,\}\,\uplus\cdots\\
&=\{\,((k,(\theta_x,\theta_y)),(xs_1\uplus xs_2,ys_1\uplus ys_2))\\
&\hspace*{4ex}|\,(k,(s^1_1,s^1_2))\in X,\,(k,(s^2_1,s^2_2))\in Y,\\
&\hspace*{5ex}(\theta_x,xs_1)\in\mathrm{groupBy}(s^1_1),\,(\theta_x,xs_2)\in\mathrm{groupBy}(s^2_1),\\
&\hspace*{5ex}(\theta_y,ys_1)\in\mathrm{groupBy}(s^1_2),\,(\theta_y,ys_2)\in\mathrm{groupBy}(s^2_2)\,\}\,\uplus\cdots\\
&=\{\,((k,(\theta_x,\theta_y)),(xs_1\uplus xs_2,ys_1\uplus ys_2))\\
&\hspace*{4ex}|\,((k,(\theta_x,\theta_y)),(xs_1,ys_1))\in\mathrm{mix}(X),\\
&\hspace*{5ex}((k,(\theta_x,\theta_y)),(xs_2,ys_2))\in\mathrm{mix}(Y)\,\}\,\uplus\cdots\\
&=\mathrm{mix}(X)\gb_{\uplus\times\uplus}\mathrm{mix}(Y)&&\qedhere
\end{align*}
\end{proof}

\setcounter{theorem}{0}
\begin{theorem}[Homomorphism]
Any transformed term $\trq{q}$, where $q$ is defined in
Definition~\ref{normalized-algebra}, is a homomorphism over the input
streams, provided that each generated sMap1 and sMap2 term in $\trq{q}$ satisfies
the premises in Judgment~(\ref{smap1-rule}) and~(\ref{smap2-rule}):
\begin{gather}
\rho[S_1:\uplus,\ldots,S_n:\uplus]\vdash\trq{q}:\monoid{q}\tag{\ref{cc2}}
\end{gather}
\end{theorem}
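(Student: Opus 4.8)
The plan is to establish the judgment by structural induction on the normalized query term, proving three mutually-recursive statements at once: $\rho\vdash\trq{q}:\monoid{q}$ for query headers, $\rho\vdash\tre{e}:\gb_\uplus$ when $e$ is a groupBy and $\rho\vdash\tre{e}:\gb_{\uplus\times\uplus}$ when $e$ is a coGroup, and $\rho\vdash\trc{c}:\uplus$ for every body cMap, all under the fixed environment $\rho=[S_1\!:\uplus,\ldots,S_n\!:\uplus]$. The recursion mirrors the mutual recursion of $\mathcal{T}_q/\mathcal{T}_e/\mathcal{T}_c$ in Algorithm~\ref{annotation-alg}, and it is well-founded since each of Rules~(\ref{norm2a})--(\ref{norm8}) recurses on strictly smaller subterms and every branch bottoms out at a stream source, which carries the monoid $\uplus$ by Rule~(\ref{bind}).

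Most cases then collapse to a single appeal to one of the already-proven judgments. For the body-cMap statement, Rule~(\ref{norm7}) is discharged by Judgment~(\ref{smap3-rule}) and Rule~(\ref{norm8}) by Judgment~(\ref{smap2-rule}) applied to the induction hypothesis $\tre{e}:\gb_\oplus$. For the $\tre{\cdot}$ statement, Rule~(\ref{norm5}) follows by composing Judgment~(\ref{swap-rule}) with Rule~(\ref{groupby-monoid}), and Rule~(\ref{norm6}) by composing Rule~(\ref{cogroup-monoid}) with Judgment~(\ref{mix-rule}), in both subcases feeding in the $\uplus$-annotations of the $\trc{c_i}$ from the induction hypothesis. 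For the header statement, the plain source cMap of Rule~(\ref{norm3}) is annotated $\uplus$ by Rule~(\ref{cmap-box}), matching $\monoid{\mathrm{cMap}(f,S_i)}=\uplus$, and Rule~(\ref{norm4}) is annotated $\gb_\otimes$ by Judgment~(\ref{smap1-rule}), matching the last clause of Definition~\ref{merger-monoid}. The theorem's hypothesis, that each generated $\mathrm{sMap1}$ and $\mathrm{sMap2}$ meets the premises of~(\ref{smap1-rule}) and~(\ref{smap2-rule}), is exactly what these steps require, so they are pure bookkeeping.

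The delicate part, and the step I expect to be the main obstacle, is the total-aggregation header, i.e.\ the two reduce Rules~(\ref{norm2a}) and~(\ref{norm2}). Here the monoid genuinely changes shape: the scalar aggregation monoid $\oplus$ of the source query must reappear as the keyed full-outer-join monoid $\gb_\oplus=\monoid{\mathrm{reduce}(\oplus,\mathrm{cMap}(f,e))}$ on the lineage-annotated state, rather than being propagated unchanged. For Rule~(\ref{norm2a}) the transformed term is the singleton $\{((),\mathrm{reduce}(\oplus,\mathrm{cMap}(f,S_i)))\}$; I would first note that $\mathrm{reduce}(\oplus,\mathrm{cMap}(f,\cdot))$ is $\oplus$-homomorphic over $\uplus$ by Rules~(\ref{reduce}) and~(\ref{cmap-box}), and then observe that placing the scalar result under the single empty-lineage key $()$ promotes the scalar merge to $\gb_\oplus$, since $\{((),a)\}\gb_\oplus\{((),b)\}=\{((),a\oplus b)\}$ directly from Definition~(\ref{gb-def}).

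For Rule~(\ref{norm2}) the target is $\mathrm{reduce}(\gb_\oplus,\mathrm{sMap1}(f,\tre{e}))$, and the crucial point is that here the functional argument $f$ only emits a bag of summands, the actual aggregation being performed by the enclosing reduce. Under this reading the relevant premise specializes to $f$ being $\uplus$-to-$\uplus$ homomorphic, so a direct calculation in the style of the proof of Judgment~(\ref{smap1-rule}) gives $\mathrm{sMap1}(f,\tre{e}):\uplus$; the outer $\mathrm{reduce}(\gb_\oplus,\cdot)$ then lifts this to $\gb_\oplus$ by Rule~(\ref{reduce}) instantiated at the composite target monoid $\gb_\oplus$. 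Making the ``no aggregation hidden inside $f$'' reading explicit is the subtle point, because any inner reduce inside $f$ would destroy $\uplus$-homomorphism and would instead have to be pulled out beforehand by the rewrite rules of Section~\ref{factoring}. With the reduce cases settled, every remaining case is a routine substitution of the induction hypotheses into the cited judgments, completing the induction.
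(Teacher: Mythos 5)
Your proposal is correct and follows essentially the same route as the paper's own proof: a mutual structural induction over the translations $\mathcal{T}_q$, $\mathcal{T}_e$, $\mathcal{T}_c$, whose cases are discharged by Judgments~(\ref{smap1-rule})--(\ref{mix-rule}) together with Rules~(\ref{reduce}), (\ref{cmap-box}), (\ref{groupby-monoid}), and~(\ref{cogroup-monoid}). The only difference is one of detail: where the paper dispatches the reduce case~(\ref{norm2}) with a bare citation of Judgments~(\ref{reduce}) and~(\ref{smap1-rule}) and leaves cases~(\ref{norm2a}) and~(\ref{norm3}) implicit, you spell out the required specialization---$f$ emits summands that merge by $\uplus$, so $\mathrm{sMap1}(f,\tre{e}):\uplus$, and the enclosing $\mathrm{reduce}(\gb_\oplus,\cdot)$ lifts this to $\gb_\oplus$---which is precisely the reading the paper's citation presupposes.
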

\begin{proof}
Using induction, we can first prove that $\trc{c}:\uplus$,\linebreak $\tre{\mathrm{groupBy}(c)}:\,\gb_\uplus$, and
$\tre{\mathrm{coGroup}(c_1,c_2)}:\,\gb_{\uplus\times\uplus}$, for all $c,c_1,c_2$.
We have $\trc{\mathrm{cMap}(f,S_i)}=\mathrm{sMap3}(f,S_i):\uplus$ from Judgment~(\ref{smap3-rule}),
and $\trc{\mathrm{cMap}(f,e)}=\mathrm{sMap2}(f,\tre{e}):\uplus$ from Judgment~(\ref{smap2-rule}),
because $\trq{e}$ is either $\gb_\uplus$ or $\gb_{\uplus\times\uplus}$.
Similarly, $\tre{\mathrm{groupBy}(c)}=\mathrm{groupBy}(\mathrm{swap}(\trc{c})):\,\gb_\uplus$
from Judgments~(\ref{groupby-monoid}) and~(\ref{swap-rule}), and $\tre{\mathrm{coGroup}(c_1,c_2)}$\linebreak
$=\mathrm{mix}(\mathrm{coGroup}(\trc{c_1},\trc{c_2})):\,\gb_{\uplus\times\uplus}$
from Judgments~(\ref{cogroup-monoid}) and~(\ref{mix-rule}).
Then, $\trq{\mathrm{cMap}(f,e)}=\mathrm{sMap1}(f,\tre{e}):\,\gb_\otimes$ from Judgment~(\ref{smap1-rule}), and
$\trq{\mathrm{reduce}(\oplus,\mathrm{cMap}(f,e))}=\mathrm{reduce}(\gb_\oplus,\mathrm{sMap1}(f,\tre{e})):\,\gb_\oplus$
from Judgments~(\ref{reduce}) and~(\ref{smap1-rule}).
\end{proof}

We will prove Theorem~\ref{correctness} using the following lemma:
\begin{lemma}\label{lemma}
The following equations apply to terms in Definition~\ref{normalized-algebra}
for a function $f$ that satisfies the premise of Equation~(\ref{smap2-rule}):
\begin{subequations}
\begin{align}
G_1(\mathrm{groupBy}(\mathrm{swap}(X))) =&\;\mathrm{groupBy}(T'(X))\label{c1}\\
G_2(\mathrm{mix}(\mathrm{coGroup}(X,Y))) =&\;\mathrm{coGroup}(T'(X),T'(Y))\label{c2}\\
T'(\mathrm{sMap2}(f,X)) = \mathrm{cMap}&(f,G_1(X)),\hspace*{1ex}\mbox{if $X\!:\,\gb_\uplus$}\label{c3}\\
T'(\mathrm{sMap2}(f,X)) = \mathrm{cMap}&(f,G_2(X)),\hspace*{1ex}\mbox{if $X\!:\,\gb_{\uplus\times\uplus}$}\label{c4}
\end{align}
\end{subequations}
where $T'$ removes the lineage $\theta$, while $G_1$ and $G_2$, in addition to removing $\theta$,
reconstruct the groupBy and coGroup result:
\begin{subequations}
\begin{align}
T'(X) = & \{\,(k,a)\,|\,(k,(\theta,a))\in X\,\}\label{TT}\\
G_1(X) = & \;\mathrm{groupBy}(\{\,(k,a)\,|\,((k,\theta),s)\in X,\,a\in s\,\})\label{G1}\\
G_2(X) = & \;\mathrm{coGroup}(\{\,(k,a)\,|\,((k,\theta),(s_1,s_2))\in X,\,a\in s_1\,\},\nonumber\\
& \skiptext{$\;\mathrm{co}($}\{\,(k,b)\,|\,((k,\theta),(s_1,s_2))\in X,\,b\in s_2\,\})\label{G2}
\end{align}
\end{subequations}
\end{lemma}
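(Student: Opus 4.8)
The plan is to establish each of the four identities~(\ref{c1})--(\ref{c4}) separately by unfolding the set-former definitions of the operators involved, and to organize the work into two symmetric halves: the ``groupBy side'' (\ref{c1}) and (\ref{c3}), which govern $\tre{\mathrm{groupBy}(c)}$, and the ``coGroup side'' (\ref{c2}) and (\ref{c4}), which govern $\tre{\mathrm{coGroup}(c_1,c_2)}$. In each case the strategy is the same: push $T'$, $G_1$, or $G_2$ through the transformed operator, rewrite the resulting comprehension into the shape of the right-hand side, and invoke the unnesting identities already proved for groupBy (Equation~(\ref{groupby-unnest})) and for coGroup (the two displayed identities following the coGroup definition in Section~\ref{mrql-algebra}).

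For the groupBy side I would start with (\ref{c1}). Expanding $\mathrm{swap}$ by (\ref{swap}) turns $(k,(\theta,v))\in X$ into $((k,\theta),v)$, so $\mathrm{groupBy}(\mathrm{swap}(X))$ groups by the full lineage key $(k,\theta)$. Applying $G_1$ from (\ref{G1}) then flattens this groupBy and, by the unnesting identity (\ref{groupby-unnest}), recovers $\mathrm{swap}(X)$ up to the projection that drops $\theta$ from the key; the projected bag is exactly $T'(X)$ by (\ref{TT}), and a final groupBy yields $\mathrm{groupBy}(T'(X))$. Identity (\ref{c3}) is then a direct computation: unfolding $\mathrm{sMap2}$ by (\ref{smap2}) and $T'$ by (\ref{TT}) reduces the left-hand side to applying $f$ to the elements of the groups reconstructed by $G_1$, which is precisely $\mathrm{cMap}(f,G_1(X))$. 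Here the premise that $f$ satisfies the hypothesis of Judgment~(\ref{smap2-rule}), i.e. that $f$ is a $\uplus$-homomorphism in its second argument, is what lets the $\uplus$-branches be matched term by term.

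The coGroup side is analogous in outline but heavier in bookkeeping. For (\ref{c2}) I would first unfold $\mathrm{coGroup}$ and $\mathrm{mix}$ by (\ref{mix}), observe that $\mathrm{mix}$ regroups each coGroup result $(k,(s_1,s_2))$ by the inner lineages $\theta_x$ and $\theta_y$ via two nested groupBys, and then apply $G_2$ from (\ref{G2}). The two arguments of the reconstructed coGroup in $G_2$ should collapse, using (\ref{groupby-unnest}) on the inner groupBys and then the coGroup unnesting identities on the outer coGroup, to $T'(X)$ and $T'(Y)$ respectively, giving $\mathrm{coGroup}(T'(X),T'(Y))$. Identity (\ref{c4}) then follows by the same $\mathrm{sMap2}$/$T'$ unfolding used for (\ref{c3}), now with $X:\gb_{\uplus\times\uplus}$ so that $G_2$ rather than $G_1$ supplies the reconstruction.

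I expect the main obstacle to be the multiplicity accounting in the $\mathrm{mix}$--$G_2$ round trip for (\ref{c2}). Because $\mathrm{mix}$ forms the full cross product of the $\theta_x$-groups of $s_1$ with the $\theta_y$-groups of $s_2$ for a fixed join key $k$, a single $\theta_x$-group is replicated once per $\theta_y$-group when $G_2$ flattens the left component (and dually on the right), so a naive flatten would over-count and break bag equality. Making the round trip exact therefore requires exploiting that, in a transformed query, each coGroup join key is associated with a controlled set of lineages --- in particular the empty-lineage situation of the running example, where $\mathrm{groupBy}(s_i)=\{((),\pi_2(s_i))\}$ is a single group, and more generally the one-to-one/one-to-many restriction enforced by the $\Box$ annotation of Judgment~(\ref{cogroup-box}). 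Pinning down and invoking this invariant, so that the cross product introduced by $\mathrm{mix}$ is precisely undone by $G_2$, is the step I would spend the most care on; the remaining manipulations are routine comprehension rewriting of the kind already carried out in the proof of Lemma~\ref{smap-hom}.
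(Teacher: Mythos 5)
Your proposal follows essentially the same route as the paper's own proof: (\ref{c1}) and (\ref{c2}) by unfolding $\mathrm{swap}$, $\mathrm{mix}$, $G_1$, $G_2$ and invoking the unnesting identity~(\ref{groupby-unnest}), and (\ref{c3}), (\ref{c4}) from the homomorphism premise of Judgment~(\ref{smap2-rule}). The only presentational difference is that the paper proves~(\ref{c3}) by induction over the $\gb_\uplus$ structure of $X$ (using that $G_1$, $\mathrm{cMap}(f,\cdot)$, $\mathrm{sMap2}(f,\cdot)$, and $T'$ all distribute over that structure), whereas you match the $\uplus$-branches directly; both arguments rest on exactly the same use of $f$'s homomorphism property.

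The point that deserves comment is the obstacle you single out in the $\mathrm{mix}$--$G_2$ round trip, because it is real and the paper's proof does not dispose of it --- it performs precisely the ``naive flatten'' you warn against. Concretely, the paper rewrites
\begin{align*}
\{\,(k,a)\,|\,{}&(k,(q_1,q_2))\in\mathrm{coGroup}(X,Y),\,(\theta_x,xs)\in\mathrm{groupBy}(q_1),\\
&(\theta_y,ys)\in\mathrm{groupBy}(q_2),\,a\in xs\,\}
\end{align*}
to $\{\,(k,a)\,|\,(k,(q_1,q_2))\in\mathrm{coGroup}(X,Y),\,a\in\pi_2(q_1)\,\}$ in a single step. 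As a bag equality this presupposes that $\mathrm{groupBy}(q_2)$ is a singleton for every join key (and dually for the right component), which is exactly the single-lineage-per-key invariant you say must be pinned down: it holds when both coGroup inputs are $\mathrm{sMap3}$ terms (all lineages are $()$), but it is not a consequence of the stated hypothesis on $f$ alone. There is also a degenerate instance of the same problem that you did not mention: when a join key occurs on only one side, the comprehension~(\ref{mix}) produces no tuple for that key at all (it requires a group on both sides), while $\mathrm{coGroup}(T'(X),T'(Y))$ retains the key with one empty component, so the two sides of~(\ref{c2}) can even differ in their key sets. In short, your plan is correct in outline and is in fact more candid than the paper's proof at its weakest step; carrying it out rigorously would require promoting the lineage invariant (or the one-sided restriction behind Judgment~(\ref{cogroup-box})) to an explicit hypothesis of the lemma, rather than leaving it implicit as the paper does.
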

\begin{proof}
Proof of Equation~(\ref{c1}) using Equation~(\ref{groupby-unnest}):
\[\begin{array}{l}
G_1(\mathrm{groupBy}(\mathrm{swap}(X)))\\
= \mathrm{groupBy}(\{\,(k,a)\,|\,((k,\theta),s)\in\mathrm{groupBy}(\mathrm{swap}(X)),\,a\in s\,\})\\
= \mathrm{groupBy}(T(\{\,((k,\theta),a)\,|\,((k,\theta),s)\in \mathrm{groupBy}(\mathrm{swap}(X)),\\
\skiptext{$=\mathrm{groupBy}(T'(\{\,((k,\theta),a)\,|\,$}a\in s\,\}))\\
= \mathrm{groupBy}(T(\mathrm{swap}(X)))\\
= \mathrm{groupBy}(T'(X))
\end{array}\]
Proof of Equation~(\ref{c2}):
\[\begin{array}{l}
G_2(\mathrm{mix}(\mathrm{coGroup}(X,Y)))\\
=\mathrm{coGroup}(\{\,(k,a)\,|\,((k,\theta),(s_1,s_2))\in\mathrm{mix}(\mathrm{coGroup}(X,Y)),\\
\skiptext{$=\mathrm{coGroup}(\{\,(k,a)\,|\,$}a\in s_1\,\},\\
\skiptext{$=\mathrm{coGroup}($}\{\,(k,b)\,|\,((k,\theta),(s_1,s_2))\in\mathrm{mix}(\mathrm{coGroup}(X,Y)),\\
\skiptext{$=\mathrm{coGroup}(\{\,(k,a)\,|\,$}b\in s_2\,\})\\
=\mathrm{coGroup}(\{\,(k,a)\,|\,(k,(q_1,q_2))\in\mathrm{coGroup}(X,Y),\\
\skiptext{$=\mathrm{coGroup}(\{\,(k,a)\,|\,$}(\theta_x,xs)\in\mathrm{groupBy}(q_1),\\
\skiptext{$=\mathrm{coGroup}(\{\,(k,a)\,|\,$}(\theta_y,ys)\in\mathrm{groupBy}(q_2),\,a\in xs\,\},\\
\skiptext{$=\mathrm{coGroup}($}\{\,(k,b)\,|\,(k,(q_1,q_2)))\in\mathrm{coGroup}(X,Y),\\
\skiptext{$=\mathrm{coGroup}(\{\,(k,b)\,|\,$}(\theta_x,xs)\in\mathrm{groupBy}(q_1),\\
\skiptext{$=\mathrm{coGroup}(\{\,(k,b)\,|\,$}(\theta_y,ys)\in\mathrm{groupBy}(q_2),\,b\in ys\,\})\\
=\mathrm{coGroup}(\{\,(k,a)\,|\,(k,(q_1,q_2))\in\mathrm{coGroup}(X,Y),\\
\skiptext{$=\mathrm{coGroup}(\{\,(k,a)\,|\,$}a\in\pi_2(q_1)\,\},\\
\skiptext{$=\mathrm{coGroup}($}\{\,(k,b)\,|\,(k,(q_1,q_2))\in\mathrm{coGroup}(X,Y),\\
\skiptext{$=\mathrm{coGroup}(\{\,(k,b)\,|\,$}b\in\pi_2(q_2)\,\})\\
=\mathrm{coGroup}(\{\,(k,a)\,|\,(k,(s_1,s_2))\in\mathrm{coGroup}(T'(X),T'(Y)),\\
\skiptext{$=\mathrm{coGroup}(\{\,(k,a)\,|\,$}a\in s_1\,\},\\
\skiptext{$=\mathrm{coGroup}($}\{\,(k,b)\,|\,(k,(s_1,s_2))\in\mathrm{coGroup}(T'(X),T'(Y)),\\
\skiptext{$=\mathrm{coGroup}(\{\,(k,b)\,|\,$}b\in s_2\,\})\\
=\mathrm{coGroup}(T'(X),T'(Y))
\end{array}\]
It is easy to prove that, for an $f$ that satisfies the premise of Equation~(\ref{smap2-rule}),
we have $\mathrm{cMap}(f,X):\uplus$ for $X:\,\gb_\uplus$.
Then, we can prove Equation~(\ref{c3}) for $X\gb_\uplus Y$ using Equation~(\ref{smap2-rule}) and induction:
\begin{align*}
&\mathrm{cMap}(f,G_1(X\gb_\uplus Y))\\
&=\mathrm{cMap}(f,(G_1(X)\gb_\uplus G_1(Y)))\\
&=\mathrm{cMap}(f,G_1(X))\uplus\mathrm{cMap}(f,G_1(Y))\\
&=T'(\mathrm{sMap2}(f,X))\uplus T'(\mathrm{sMap2}(f,Y))\\
&=T'(\mathrm{sMap2}(f,X\gb_\uplus Y))
\end{align*}
The proof of Equation~(\ref{c4}) is similar.
\end{proof}

\begin{theorem}[Correctness]
The $\answer{q}_x$ over the state $x=\trq{q}$ returns the same result as the original query $q$,
where $q$ is defined in Definition~\ref{normalized-algebra}:
\begin{align}
&\answer{q}_x = q & \mbox{for $x=\trq{q}$}\tag{\ref{cc1}}
\end{align}
\end{theorem}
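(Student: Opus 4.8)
The plan is to argue by structural induction on the normalized grammar of Definition~\ref{normalized-algebra}, separating the query into its header (a $\mathrm{reduce}$ or a bare cMap) and its body (a tree of groupBy/coGroup nodes linked by cMaps). The engine of the argument is Lemma~\ref{lemma}: equations~(\ref{c1})--(\ref{c4}) say exactly that the lineage-stripping operators $T'$, $G_1$, $G_2$ commute with $\mathrm{swap}$, $\mathrm{mix}$, and $\mathrm{sMap2}$ in the way needed to ``undo'' the annotation. So I would first use them to establish, by induction on $c$, the body-inversion identity $T'(\trc{c})=c$: the base case $c=\mathrm{cMap}(f,S_i)$ is immediate by unfolding $\mathrm{sMap3}$ and $T'$; the step for $c=\mathrm{cMap}(f,\mathrm{groupBy}(c'))$ chains~(\ref{c3}) and~(\ref{c1}) to get $T'(\trc{c})=\mathrm{cMap}(f,\mathrm{groupBy}(T'(\trc{c'})))$ and then applies the induction hypothesis, and the coGroup step is identical using~(\ref{c4}) and~(\ref{c2}). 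Via~(\ref{groupby-unnest}) this also yields $G_1(\tre{\mathrm{groupBy}(c)})=\mathrm{groupBy}(c)$ and $G_2(\tre{\mathrm{coGroup}(c_1,c_2)})=\mathrm{coGroup}(c_1,c_2)$; equivalently, the lineage-refined groups of the transformed body partition the coarse groups of the original body.

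With the body settled, I would dispatch the four header cases against Definition~\ref{query-answer}. The two cases whose body is a raw source are immediate: for $q=\mathrm{cMap}(f,S_i)$, Rule~(\ref{norm3}) makes $x$ the empty-lineage tagging of the output and $\answer{q}_x=\pi_2(x)$ is literally $\mathrm{cMap}(f,S_i)$; for $q=\mathrm{reduce}(\oplus,\mathrm{cMap}(f,S_i))$, Rule~(\ref{norm2a}) makes $x$ a singleton carrying the whole aggregate, which $\mathrm{reduce}(\oplus,\pi_2(x))$ returns unchanged. The case $q=\mathrm{reduce}(\oplus,\mathrm{cMap}(f,e))$ with $e\neq S_i$ uses Rule~(\ref{norm2}): here $x=\mathrm{reduce}(\gb_\oplus,\mathrm{sMap1}(f,\tre{e}))$ computes partial $\oplus$-aggregates per lineage class, and since $\oplus$ is associative and commutative and $f$ is a homomorphism, these partial results recombine under $\oplus$ to the same value as the total aggregate of $\mathrm{cMap}(f,e)$; the scalar nature of the final reduce means no group structure has to be preserved, so this case is tame.

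The one genuinely delicate case, and the step I expect to be the main obstacle, is the bare $q=\mathrm{cMap}(f,e)$ with $e\neq S_i$, where $x=\mathrm{sMap1}(f,\tre{e})$ and $\answer{q}_x=\pi_2(\mathrm{reduce}(\gb_\otimes,T(x)))$. Unlike the body cMaps, this outermost cMap is built from $\mathrm{sMap1}$, which the lemma does not cover directly, and it applies $f$ to each lineage-refined group \emph{before} the answer function collapses the refinement; the whole content of the case is that re-grouping by the coarse key $k$ and merging with $\gb_\otimes$ exactly reconstructs $f$ evaluated on the unrefined group. I would combine two ingredients: the body inversion above, which guarantees that for each $k$ the refined subgroups $\{s'_\theta\}_\theta$ satisfy $\biguplus_\theta s'_\theta$ equal to the original group of $c$ at $k$; and the homomorphism premise of Judgment~(\ref{smap1-rule}), $f(k,v_1\uplus v_2)=f(k,v_1)\gb_\otimes f(k,v_2)$, which lets me push the union through $f$ as an iterated $\gb_\otimes$. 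Since $T$ strips the inner lineage $\theta$ while retaining $k$, and $\mathrm{reduce}(\gb_\otimes,\cdot)$ performs precisely the $k$-indexed $\gb_\otimes$-merge, the two ingredients align to yield $\{(k,f(k,s_k))\}$ for the coarse group $s_k$, whose projection is $\mathrm{cMap}(f,\mathrm{groupBy}(c))=q$; the coGroup variant is the same argument carried through the pair monoid $\uplus\times\uplus$ on the joined group. The careful part will be the bookkeeping of how $\mathrm{sMap1}$ flattens the output of $f$ and how that flattening interacts with the $\gb_\otimes$-merge, and verifying that distinct coarse keys never get merged across lineages so that no spurious contributions are introduced.
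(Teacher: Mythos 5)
Your proposal is correct and follows essentially the same route as the paper's own proof: the body-inversion identity $T'(\trc{c})=c$ established by induction from Lemma~\ref{lemma}, followed by a case split on the query header, with the bare $\mathrm{cMap}(f,e)$ case resolved by exactly the auxiliary identity the paper invokes, namely $\pi_2(\mathrm{reduce}(\gb_\otimes,T(\mathrm{sMap1}(f,X))))=\mathrm{cMap}(f,G_1(X))$, proved from the homomorphism premise of Judgment~(\ref{smap1-rule}). The only difference is presentational: you flag that identity as the main obstacle and sketch its proof via pushing $\uplus$ through $f$ as iterated $\gb_\otimes$, whereas the paper states it as an easy induction and moves on.
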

\begin{proof}
To prove Equation~(\ref{cc1}), we first prove $T'(\trc{c})=c$ from Equations~(\ref{norm7}) and~(\ref{norm8}).
If $c=\mathrm{cMap}(f,\,S_i)$, then
\[\begin{array}{l}
T'(\trc{\mathrm{cMap}(f,\,S_i)})\\
=T'(\mathrm{sMap3}(f,\,S_i))\\
=\mathrm{cMap}(f,S_i)
\end{array}\]
If $e=\mathrm{cMap}(f,\,\mathrm{groupBy}(c))$, then using Equations~(\ref{norm8}), (\ref{norm5}), (\ref{c1}) and~(\ref{c3}):
\[\begin{array}{l}
T'(\trc{\mathrm{cMap}(f,\,\mathrm{groupBy}(c))})\\
=T'(\mathrm{sMap2}(f,\,\tre{\mathrm{groupBy}(c)}))\\
=T'(\mathrm{sMap2}(f,\,\mathrm{groupBy}(\mathrm{swap}(\trc{c}))))\\
=\mathrm{cMap}(f,\,G_1(\mathrm{groupBy}(\mathrm{swap}(\trc{c}))))\\
=\mathrm{cMap}(f,\,\mathrm{groupBy}(T'(\trc{c})))\\
=\mathrm{cMap}(f,\,\mathrm{groupBy}(c))
\end{array}\]
If $e=\mathrm{coGroup}(c_1,c_2)$, then using Equations~(\ref{norm8}), (\ref{norm6}), (\ref{c2}) and~(\ref{c4})):
\[\begin{array}{l}
T'(\trc{\mathrm{cMap}(f,\,\mathrm{coGroup}(c_1,c_2))})\\
=T'(\mathrm{sMap2}(f,\,\tre{\mathrm{coGroup}(c_1,c_2)}))\\
=T'(\mathrm{sMap2}(f,\,\mathrm{mix}(\mathrm{coGroup}(\trc{c_1},\trc{c_2}))))\\
=\mathrm{cMap}(f,\,G_2(\mathrm{mix}(\mathrm{coGroup}(\trc{c_1},\trc{c_2}))))\\
=\mathrm{cMap}(f,\,\mathrm{coGroup}(T'(\trc{c_1}),T'(\trc{c_2})))\\
=\mathrm{cMap}(f,\,\mathrm{coGroup}(c_1,c_2))
\end{array}\]

We now prove Equation~(\ref{cc1}) using induction on $q$.
Based on Definition~\ref{query-answer}, we have three cases.
When $q=(q_1,q_2)$, then for $x=\trq{(q_1,q_2)}=(\trq{q_1},\trq{q_2})$:
\begin{align*}
&\answer{(q_1,q_2)}_x\\
&=(\answer{q_1}_{\pi_1(x)},\,\answer{q_2}_{\pi_2(x)})\\
&=(q_1,\,q_2) &\mbox{(from induction hypothesis)}
\end{align*}
For $q=\mathrm{cMap}(f,\,S_i)$, we have:
\[\begin{array}{l}
\answer{\mathrm{cMap}(f,\,S_i)}_x\\
=\pi_2(x)\\
=\pi_2(\trq{\mathrm{cMap}(f,\,S_i)})\\
=\pi_2(\{\,((),b)\,|\,a\in S_i,\,b\in f(a)\,\})\\
=\mathrm{cMap}(f,\,S_i)
\end{array}\]
For $q=\mathrm{cMap}(f,\,\mathrm{groupBy}(c))$, we have:
\[\begin{array}{l}
\answer{\mathrm{cMap}(f,\,\mathrm{groupBy}(c))}_x
=\pi_2(\mathrm{reduce}(\gb_\otimes,T(x)))
\end{array}\]
where
\[\begin{array}{rcl}
x & = & \trq{\mathrm{cMap}(f,\,\mathrm{groupBy}(c))}\\
  & = & \mathrm{sMap1}(f,\,\mathrm{groupBy}(\mathrm{swap}(\trc{c})))
\end{array}\]
It is easy to prove by induction that for $X:\;\gb_\otimes$ we have
$\pi_2(\mathrm{reduce}(\gb_\otimes,T(\mathrm{sMap1}(f,X))))=\mathrm{cMap}(f,G_1(X))$.
Then, 
\[\begin{array}{l}
\pi_2(\mathrm{reduce}(\gb_\otimes,T(\mathrm{sMap1}(f,\,\mathrm{groupBy}(\mathrm{swap}(\trc{c}))))))\\
=\mathrm{cMap}(f,\,G_1(\mathrm{groupBy}(\mathrm{swap}(\trc{c}))))\\
=\mathrm{cMap}(f,\,\mathrm{groupBy}(T(\trc{c})))\\
=\mathrm{cMap}(f,\,\mathrm{groupBy}(c))
\end{array}\]
The proof for $q=\mathrm{cMap}(f,\,\mathrm{coGroup}(c_1,c_2))$ is similar.
For $q=\mathrm{reduce}(\oplus,\,\mathrm{cMap}(f,\,e))$,
and for $e=\mathrm{groupBy}(c)$ or $e=\mathrm{coGroup}(c_1,c_2)$
(the proof is easier for $e=S_i$), we have:
\begin{align*}
&\answer{\mathrm{reduce}(\oplus,\,\mathrm{cMap}(f,\,e))}_x\\
&=\mathrm{reduce}(\oplus,\,\pi_2(x))\\
&=\mathrm{reduce}(\oplus,\,\pi_2(\trq{\mathrm{reduce}(\oplus,\,\mathrm{cMap}(f,\,e))}))\\
&=\mathrm{reduce}(\oplus,\,\pi_2(\mathrm{reduce}(\gb_\oplus,\,\mathrm{sMap1}(f,\,\tre{e}))))\\
&=\mathrm{reduce}(\oplus,\,\pi_2(\mathrm{reduce}(\gb_\oplus,\,\trq{\mathrm{cMap}(f,\,e)})))\\
&=\mathrm{reduce}(\oplus,\,\mathrm{cMap}(f,\,e))
\end{align*}
from the previous proofs for $q=\mathrm{cMap}(f,\,\mathrm{groupBy}(c))$ and\linebreak
$q=\mathrm{cMap}(f,\,\mathrm{coGroup}(c_1,c_2))$.
\end{proof}

\begin{theorem} For all $X,Y:$ $(X\otimes Y)\,\diff{\otimes}\,Y=X$.
\end{theorem}
\begin{proof} We will prove this for $\Downarrow_{\diff{\oplus}}$ only:
\begin{align*}
&(X\gb_\oplus Y)\Downarrow_{\diff{\oplus}}Y\\
&=\{\,(k,z\diff{\oplus}y)\,|\,(k,z)\in(X\gb_\oplus Y),\,(k,y)\in Y,\,z\not=y\,\}\\
&\skiptext{$=\;$}\uplus\;\{\,(k,z)\,|\,(k,z)\in(X\gb_\oplus Y),\,k\not\in\pi_1(Y)\,\}\\
&=\{\,(k,(x\oplus y)\diff{\oplus}y)\,|\,(k,x)\in X,\,(k,y)\in Y,\\
&\skiptext{$=\{\,(k,(x\oplus y)\diff{\oplus}y)\,|\,\;$}(k,y)\in Y,\,(x\oplus y)\not=y\,\}\\
&\skiptext{$=\;$}\uplus\;\{\,(k,z)\,|\,(k,z)\in(\{\,(k,x\oplus y)\,|\,(k,x)\in X,\,(k,y)\in Y\,\}\\
&\skiptext{$=\;\uplus\;\{\,(k,z)\,|\,(k,z)\in($}\uplus\;\{\,(k,x)\,|\,(k,x)\in X,\,k\not\in\pi_1(Y)\,\}\\
&\skiptext{$=\;\uplus\;\{\,(k,z)\,|\,(k,z)\in($}\uplus\;\{\,(k,y)\,|\,(k,y)\in Y,\,k\not\in\pi_1(X)\,\}),\\
&\skiptext{$=\;\uplus\;\{\,(k,z)\,|\,$}k\not\in\pi_1(Y)\,\}\\
&=\{\,(k,(x\oplus y)\diff{\oplus}y)\,|\,(k,x)\in X,\,(k,y)\in Y,\\
&\skiptext{$=\{\,(k,(x\oplus y)\diff{\oplus}y)\,|\,\;$}(k,y)\in Y,\,(x\oplus y)\not=y\,\}\\
&\skiptext{$=\;$}\uplus\;\{\,(k,x)\,|\,(k,x)\in X,\,k\not\in\pi_1(Y),\,k\not\in\pi_1(Y)\,\}\\
&=\{\,(k,x)\,|\,(k,x)\in X,\,(k,y)\in Y\,\}\\
&\skiptext{$=\;$}\uplus\;\{\,(k,x)\,|\,(k,x)\in X,\,k\not\in\pi_1(Y)\,\}\\
&=X&&\qedhere
\end{align*}
\end{proof}

\end{appendix}
}

\end{document}